\def\note#1{}
\newcommand{\gnote}[1]{{\textbf{[Grant Note:} #1 \textbf{]}}}
\def\epsilon{\varepsilon}
\def\phi{\varphi}
\newlength{\boxedparwidth} \setlength{\boxedparwidth}{0.92\textwidth}
\newenvironment{model}[1]{\begin{center}\begin{tabular}{|@{\hspace{0.15in}}c@{\hspace{0.15in}}|}
\hline \\
\begin{minipage}[t]{\boxedparwidth}
\setlength{\parindent}{0.25in}
\noindent\textbf{#1 Assumptions.}
}%
{\end{minipage} \\ \\ \hline \end{tabular}\end{center}}
\newtheorem{theorem}{Theorem}
\newtheorem{remark}{Remark}
\newtheorem{definition}[theorem]{Definition}
\newenvironment{proof}{\noindent {\sc Proof:}}{$\Box$ \medskip}
\DeclareMathOperator\E{\mathbb{E}}
\begin{document}

\title{Finding Overlapping Communities in Social Networks: Toward a Rigorous Approach}

\author{Sanjeev Arora\thanks{Princeton University, Computer Science Department and Center for Computational Intractability {\tt arora@cs.princeton.edu}. Supported by National Science Foundation.}  \and Rong Ge\thanks{Princeton University, Computer Science Department  and Center for Computational Intractability  {\tt rongge@cs.princeton.edu}.Supported by National Science Foundation.} \and Sushant Sachdeva\thanks{Princeton University, Computer Science Department  and Center for Computational Intractability {\tt sachdeva@cs.princeton.edu}. Supported by National Science Foundation.} \and Grant Schoenebeck\thanks{Princeton University, Computer Science Department  and Center for Computational Intractability {\tt gschoene@cs.princeton.edu}.  This research was supported by the Simons Foundation
    Postdoctoral Fellowship.}}
\maketitle

\begin{abstract}

A {\em community} in a social network is usually understood to be a group of nodes more
densely connected with each other than with the rest of the
network. This is an important concept in most domains where networks arise: social,
technological, biological, etc. For many years algorithms for finding communities implicitly
assumed communities are nonoverlapping (leading to use of clustering-based
approaches) but there is increasing
interest in finding overlapping communities. A barrier to  finding
communities  is that the solution concept is often
defined in terms of an NP-complete problem such as Clique
or Hierarchical Clustering.

This paper seeks to
initiate a rigorous approach  to the problem of finding overlapping communities, where ``rigorous'' means that we clearly state the following: (a) the object sought by our algorithm (b) the assumptions about the underlying network (c) the (worst-case) running time.

Our assumptions about the network lie between worst-case and
average-case. An average-case analysis would require a precise
probabilistic model of the network, on which there is currently no consensus.
However, some plausible assumptions about network parameters can be
gleaned from a long body of work in the sociology community spanning five decades focusing on the study of individual communities and {\em ego-centric networks} (in graph theoretic
terms, this is the subgraph induced on a node's neighborhood).  Thus our
assumptions are somewhat ``local'' in nature. Nevertheless   they suffice to
permit a rigorous analysis of running time of algorithms that recover global structure.

Our algorithms use random sampling
similar to that in property testing and algorithms for dense graphs.
We note however that our networks are not necessarily dense graphs, not even in
local neighborhoods.

Our algorithms explore a local-global relationship between ego-centric and
socio-centric networks that we hope will provide a fruitful framework
for future work both in computer science and sociology.

\end{abstract}

\newpage

%This work studies the problem of identifying communities in a graph representing a social network when each node may belong to more than one community,  but at most some constant $d$, communities.
%We present a polynomial time algorithm that, given a graph which is the union of cliques of a particular size where each vertex is contained in at most $d$ cliques will identify all the cliques (with some minor assumptions).  Next we relax the assumption that all the communities are the same size.  We then explore the setting where the communities are random dense graphs.  We note that this is a generalization of a special case dealt with in the work by McSherry \cite{McSherry01}.
%Finally, we present a generalization to the general case of McSherry again, allowing each element to be the part of $d$ partitions, but require that the partitions only have overlap $d^2/10$.

\section{Introduction}
\label{sec:intro}

{\em Community structure} is an important characteristic of
social networks and has long been studied in sociology. The classic paper of Luce and Perry in 1949 ---which introduced the term ``Clique'' to graph theory ---described a community as subsets of individuals every pair of whom are acquainted.  The text of Scott~\cite{Scott94} equates communities with objects such as {\em cliques} or other {\em dense subgraphs}.  Another seminal 1974 paper, Breiger~\cite{Breiger74} develops a theory of communities in terms of \emph{affiliation networks}, which in graph theoretic terms consist of using a bipartite graph with people on one side and communities on the other. In sociology today, the answer to many natural and important questions depends on a better understanding of community structure. Can you travel from one node to another random node using only a few ``strong ties" \cite{Granovetter82}?  Do networks contain ``wide" bridges \cite{CentolaM07}? How much do communities overlap?

The problem of identifying communities arose independently in other fields such as  internet search, study of the web graph, and the problem of clustering network nodes (in networks of biological interactions, citations, etc.). In his recent comprehensive survey of algorithmic approaches, Fortunato~\cite{fortunato}
divides them into two camps based upon whether or not the algorithm assumes ---implicitly or explicitly--- that communities are disjoint.
Assuming disjointness implicitly leads to a view of a community as a \emph{nonexpanding node set}: it contains many edges but has relatively few edges leaving it~\footnote{The Girvan-Newman~\cite{GirvanN02} algorithm does not explicitly define communities as nonexpanding sets. Instead it defines the {\em betweenness} of a node $u$ as the fraction of nodes $v, w$ whose shortest path passes through $u$. It iteratively removes nodes of low betweenness to isolate communities.}. This viewpoint suggests many approaches that have been tried: graph partitioning, hierarchical clustering, spectral clustering, simulated annealing, modularity, betweenness, etc.
Gibson et al.~\cite{GibsonKR98} discovered interesting communities via hubs and authorities; Hopcroft et al.~\cite{Hopcroftetal03} used agglomerative clustering on the the Citeseer database and exhibited interesting communities that persist over time.

However, recently Leskovec et al~\cite{LeskovecLDM08-jounal} presented an extensive study of many of the above methods on larger datasets, and question whether they uncover meaningful structure at larger scales. Leskovec et al. often detect a large ``core" in the network that is difficult to break into communities. One possible interpretation is that if there are communities in the core, they must overlap.

Thus there is growing interest in finding communities that are allowed to overlap, as they do in most real-life social networks. When communities overlap, each community will not in general be a nonexpanding set. (Consequently, clustering-based approaches may not work.) For instance, imagine that the network contains many communities that are equal-sized cliques with bounded pairwise intersections, and every person belongs to four communities. Then each community/clique will have in general as much as three times as many edges going out of it as are contained in it.

Approaches for finding overlapping communities involve either heuristic clique-finding, or local-search procedures that maintain overlapping clusters and improving them via a series of heuristics. Sometimes a probabilistic generative model is assumed and a max-likelihood fit is attempted via EM and other ideas. (The very recent survey of Xie et al.~\cite{Xiesurvey} evaluates dozens of competing heuristics introduced in the last two years alone.)  However, Fortunato states at the end of his  100-page survey:

{\em
..research on graph clustering has not yet given a satisfactory solution of the problem and leaves us with a number of important open issues. The field has grown in a rather chaotic way, without a precise direction or guidelines...What the field lacks the most is a theoretical framework...everybody has his or her own idea of what a community is.}

\paragraph{Modeling communities.} Of course, it is entirely possible that Fortunato's questions have no clean answer, or at least one that spans all types of networks of interest. Quite possibly, a community in a network of gene-gene interactions is an inherently different object than one in the graph of facebook friendships.
Furthermore, a clear definition of the problem does not in itself guarantee a simple algorithm---e.g., if the definition involves cliques.

A related issue is  development of models for community formation/growth whose mathematical analysis yields predictions testable
on real-life networks. An inspiration here is the large body of
Barabasi-Albert~\cite{BarabasiA99} style models which make predictions about degree distributions, graph distance etc. One concrete attempt to model community formation
 is Lattanzi and Sivakumar's~\cite{LattanziS-09} {\em affiliation networks} model that is inspired by sociology work. In this dynamic model the communities are cliques (or dense subgraphs), and the mode of network growth is adding of either new individuals (i.e., nodes) or new communities (i.e., cliques) to an affiliation network. New individual partially copy the community memberships of existing individuals, and new communities are offshoots (subsets) of existing communities. Additional generative
models with community structure appear in~\cite{McSherry01,AiroldiMFX08,BorgsCDL10,KumarRRSTU00}.

What Fortunato's questions point to, though, is a seeming chicken-and-egg situation. Developing models requires reliable data about community structure in real networks. Conversely, finding reliable data about community structure requires some implicit model, since without such a model the algorithm is consigned to solving worst-case instances of NP-hard problems like Clique, Dense Subgraph, and Small-set Expansion. (This issue clearly does not arise for simpler graph properties like node degrees.)

\paragraph{This paper.} We seek a more rigorous approach to the problem of finding communities, where ``rigorous'' means that we clearly state the following: (a) the object sought by our algorithm (b) the assumptions about the underlying network (c) the (worst-case) running time.  We try to break out of the chicken-and-egg situation as follows. Instead of proposing a generative model {\em per se},
we list fairly minimal assumptions about the network that are based on theoretical and empirical work in sociology.  Moreover, these assumptions are ``local" in nature and they largely depend upon objects well-studied in sociology, namely ego-centric networks and individual communities.  We think these assumptions will be satisfied by many  plausible generative models (including Latanzi-Sivakumar, according to our simulations~\cite{rabinovich}). Thus it is interesting that these suffice to recover the communities.
%It is interesting that these minimal local assumptions already suffice to recover communities, then the global %structure of those communities ---e.g., pairwise intersections, growth over time, etc.--- could be studied to aid %the design of an accurate generative model in future.

%We do not attempt an accurate generative model.

Since our formalization of (a) and (b) draws on sociology we expect our approach to apply more to, say, the Facebook graph than biological networks. Furthermore, since our algorithms involve random sampling in node neighborhoods, they may mesh well with  a dominant approach for network study in sociology, namely, {\em ego-centric analysis}~\cite{Wellman07}.   An \emph{ego-centric} network~\cite{Scott94} consists of a person (the ego) and his ties (called ``alters").  Ego-centric networks and their structure have been extensively studied in sociology (often via questionnaires and field-study)~\cite{Wellman07} as a way to gain insight into the entire network~\cite{Wellman79,Burt92,Burt04,Fischer82}. They give a view of how people develop and manage their social network resources~\cite{MarinW-09,Wellman79}.  Data about such networks is easy to collect, even in the field, far from computers~\cite{HoganCW07}. The are even included on the biyearly General Social Survey, a central resource in sociology~\citation{GSS,Burt84,Marsden87}.

\paragraph{Sociological Foundations for our Assumptions.}
Sociologists have observed that ego-centric networks can be clustered into a few communities, from which they infer that individuals participate in only a small number of communities~\cite{WellmanHBBCCKKT05,Hogan10,Burt92,Burt04}.
%
%That these networks of alters form a small number of cluster has long been observed~\cite{Wellman07,Hogan10,SpencerP06,PahlS04}, and this very fact is often central to %analysis of the ego centric networks (for example in Burt's analysis of structural holes~\cite{Burt92,Burt04}).  This observation has been put to pragmatic use in %eliciting these networks from individuals~\cite{HoganCW07,Fitzgerarld78,McCartyG05} by helping people list more ``alters" by grouping the ``alters" into communities.  That ego-centric networks can be clustered into a small number of communities, also reinforces the intuition that these communities overlap.  The running time of our algorithms will depend strongly on the number of communities that each node is involved.
%Not only can ego-centric networks be clustered into a few communities,
Furthermore, they have observed that a large portion of a person's ties fit into communities~\cite{WellmanHBBCCKKT05,Hogan10}.  The celebrated theory of Feld~\cite{Feld81} gives a theoretical understanding for this fact based upon ``foci.'' He defines a \emph{focus} as ``a social, psychological, legal, or physical entity around which joint activities are organized (e.g., workplaces, voluntary organizations, hangouts, families, etc.)" and his theory says that they are responsible for creating many ties in the network.

Mathematically, one could say that each individual participates in up to $d$ communities, and  these communities explain $\gamma$ fraction of his/her ties. This
information already may greatly help the algorithm, whose running time may depend upon $d$ and $\gamma$.

What is a community? As mentioned, in sociology communities are thought of as either cliques or as dense subgraphs which are ``relatively tightly connected" together compared with the rest of the network (see chapter 6 of~\cite{Scott94} for an introduction and survey of how sociology models communities).  Sometimes variants are considered, e.g., Alba~\cite{albaclique} considers cliques of the $t$th power of the graph (i.e., edges correspond to having distance $\leq t$ in the original graph). Jackson's  text~\cite{Jackson08} allows communities to be dense graphs and describes various models for how edges are generated within a community: e.g., $G(n,p)$ or the {\em expected degree model}.

Furthermore, not all dense graphs would pass muster as communities~\cite{Scott94,FriggeriCF11}. For example, a union of two disjoint cliques of size $t$ is a fairly dense graph of size $2t$ but the latter is not community.   We will assume that a ``community'' is a dense subgraph with edges inside it generated according to the expected degree model.  (While this assumption simplifies the exposition, in Section~\ref{subsec:relax1} we will observe that this assumption can be relaxed to deal with other families of dense graphs.)  However, we leave it as future work to extend our notions to more hierarchical notions of communities such as in~\cite{WhiteBB76}.

Another principle often used in sociology is {\em maximality}: we should not be able to add nodes to the community and get the same structure \cite{FriggeriCF11}, otherwise these nodes should be considered part of the community.  (This was the basis for the maximal clique problem introduced in Luce and Perry's 1949 paper; see also the text of
Scott\cite{Scott94}.) Thus nodes within the community are (in some way) better attached to the community than nodes outside of the community.

\subsection{Formal Assumptions and Statement of Results}
\label{subsec:assumptions}

Our assumptions are grounded in the above observations.
%Instead we propose a setting that is halfway between worst-case and average-case in which we make outcome guarantees provide that given network satisfies some mild assumptions which incorporate intuition for the social networks literature.
The network is a
%random
 graph of  size $n$. Each edge $(u,v)$ has a probability $e_{u,v}$ with which it is picked. This can even be
$1$, so we allow adversarial edges.
Each community $C$ is an {\em arbitrary} subset of nodes (unknown to the algorithm), and each node is in at most $d$ communities, so that the communities are allowed to overlap.  We
% call $d$ the \emph{maximum community overlap}, and we
think of $d$ as constant or small (though several of our algorithms run in polynomial time even for $d =2^{\sqrt{\log n}}$).  Any edge $(u, v)$ where $u$ and $v$ share a community $C$ is a \emph{community edge}.  The remaining edges we call \emph{ambient edges}.

%\medskip

\noindent{\bf Assumption 1)} {\em Community edges are chosen according to the expected degree model.}

Each node $u$ in $C$ has an {\em affinity} $p_u$ that lies in $[0,1]$. (Node $u$ has a different affinity for each
community it belongs to.)  Two nodes $u, v$ in $C$ are connected by an edge with probability
$p_u p_v$ (this is called the {\em expected degree model} in the standard text Jackson~\cite{Jackson08}).  Notice that the  model is sufficiently flexible to include  other well-studied cases: the subcase $p_u=1$ for all $u\in C$ corresponds to ``$C$ is a clique,'' and the subcase $p_u =\sqrt{\alpha}$
for all $u \in C$ corresponds to ``$C$ is a dense subgraph generated according to the random graph model $G(k, \alpha)$.''
We will usually assume $p_u$ is lowerbounded by some constant, so that the community is always a fairly dense graph. However in Section~\ref{sec:sparse} we show conditions under which our algorithms can handle even sparser communities.

For nodes $u$, $v$ that belong to more than one community, the probability that they are connected is at least the maximum of $p_up_v$ for all the communities they are in.

While Assumption 1 seems to be tending towards
a ``generative model," we use this formulation primarily for ease of presentation.  We remark in Section~\ref{subsec:relax1} that our algorithms work so long as the edges are ``well-distributed.''

%\medskip

\noindent{\bf Assumption 2)} {\em Maximality assumption with gap $\epsilon$ (also called ``Gap Assumption'').}

Nodes outside the community $C$ are less strongly connected to it than community nodes are.
For example, if one posits  that $p_u = \sqrt{\alpha}$ for $u\in C$ ---i.e., each $u$ has edges to about $\alpha$
community nodes---then our assumption would say that each $w \not \in C$ has edges to {\em less than} a $\alpha -\epsilon$ fraction of nodes in $C$.
This seems reasonable since otherwise one  should consider whether $w$ should belong to $C$ as well. Of course, such assumptions about {\em maximality}
are standard, though the gap $\epsilon$ in this context is new. We are able to relax the Gap Assumption in certain instances (see Section~\ref{subsec:relax2}), though the results are no longer as clean and crisp.

% definitely a computer science touch.

%\medskip
\noindent{\bf Assumption 3)} {\em Community membership accounts for a significant portion of each node's edges, say a constant fraction $\gamma>0$.}

%Assumption  comes directly from observation 2. And finally, we note that $d$ is often in the exponent of the running time, and %so we need this to be small for our algorithms to be efficient.  Observation 3 assures us that this will be the case.

%\medskip

Surprisingly, Assumptions $1$-$3$  suffice to let us efficiently recover the communities
even though we made {\em no other assumptions} about the ambient edges, and allow arbitrary affinities.

\medskip

\noindent{\bf Informal Theorem 1} (See Theorem~\ref{thm:anysizedense}) {\em If all node affinities are lowerbounded by $\sqrt{\alpha}$ then the communities can be recovered in $n^{C \log kd}$
time, where $C=C_{\alpha, \gamma, \epsilon}$ and $k$ is size of the largest community.}

%\medskip

Unfortunately, this running time is only ``quasipolynomial'' instead of polynomial, since $\log k$ could be as high as $\log n$.  We can get polynomial and even near-linear time algorithms for more restricted versions of the problem.
The paper contains many such theorems and the following is representative.

\iffalse
\medskip
\noindent{\em Assumption 4: Communities are fairly distinct.}
Roughly speaking, for each node $u$ in community $C$, at least a constant factor (say $0.1$) of $C$ does not lie in any other community containing $u$. This seems a fair assumption from a sociological viewpoint, since communities are believed to arise in networks because they provide utility to their members above and beyond what is obtainable in existing communities.
\fi

% when affinities and arbitrary community sizes can be solved under
\medskip

\noindent {\bf Informal Theorem 2} (See Section~\ref{sec:similarsize})  {\em For all constants $\alpha, \delta>0$, if all affinities are $\sqrt{\alpha}$ (note: this includes cliques as a special case), and all communities sizes are within a 
$1/\delta$ factor of each other, then the communities can be recovered in time $O(n k d^{C \log d})$ where $C=C_{\alpha, \delta, \gamma, \epsilon}$. Moreover, if affinities are only guaranteed to be at least $\sqrt{\alpha}$, the communities can be recovered in time $O(n k^{2\log(10/\epsilon)/\alpha+1} d^{C\log d})$.}

Our approach makes heavy use of the Gap Assumption and uses random sampling coupled with some exhaustive enumeration of small
cliques in the subgraph induced on the sample.
% (As hinted, the Gap Assumption is not crucial in many cases. In its absence,
%the $\epsilon$ corresponds to imprecision in the answer.)
% , {\em the neighborhood of each node is a dense graph}, (meaning it contains a quadratic number of all possible edges) and thus one can make %progress using sampling techniques that are
Similar ideas are well-known in property testing~\cite{GoldreichGR98} and the related field of approximation algorithms for NP-hard problems in dense graphs~\cite{AroraKK95,FriezeK99}.
Note however that our graphs are not dense. If $d=O(1)$ then the induced graph in the neighborhood of each node is  dense, though we do allow $d$ to be large as $2^{\sqrt{\log n}}$ in many settings. Even when $d=O(1)$ we do not know how to use, for example, the weak regularity lemma~\cite{FriezeK99} (a standard tool in those other fields) to recover the communities.

%Our algorithms will, for example, take a constant-size random sample of a node's neighborhood and then run brute force algorithms (or any other %suitable heuristic) on the induced graph to derive insight into the global structure. Note that the entire graph is not dense.

The use of local sampling in the neighborhood of a single node gives our algorithms the feel of {\em ego-centric} analysis in sociology.  However, when we examine communities as dense subgraphs, our algorithms (partially) explore a two-hop neighborhood from a starting node, which is a more generalized ego-centric analysis, and is necessary because no one node has ties to the entire community.  We also adapt our ideas to the case where communities are not dense graphs (as is plausible in really large networks). Though our results are preliminary, that algorithm explores a two-hop neighborhood from a starting node.

%\gnote{Here would be a great place for a summary of results, perhaps in a grid format}

%\gnote{We should make remarks about sparse section here.  Introduce it more in its %section}

\paragraph{Paper Organization.}
Section~\ref{sec:similarsize} presents algorithms for the case when all community sizes are within an $O(1)$ factor of each other. These algorithms are quite efficient and are a good
introduction to our techniques.
Section~\ref{sec:differentsize} allows communities to have vastly different sizes, derives the most general result (Theorem~\ref{thm:anysizedense}, our Informal Theorem 1), and then
studies how to derive more efficient algorithms for specialized cases or under additional assumptions.

Section~\ref{sec:real-life} shows how in certain cases some of the Assumptions 1-3 can be relaxed. Here there are many open problems, which are also discussed in
% made throughout that members outside communities have noticeably less connection to the community than members inside the community.
Section~\ref{sec:conclusion}.  Section~\ref{sec:sparse} shows (under some stronger assumptions) how to handle the case where each community is a sparse random graph.

\subsection{ Related Work}

The above setting seems superficially similar to other planted graph problems that were successfully treated using  SVD (singular value decomposition)
(see McSherry~\cite{McSherry01} and others). This similarity is however illusory because, first, the non-community edges in our model are
not necessarily randomly distributed, and more seriously, because the SVD techniques are known for finding
{\em vertex partitions} whereas here
due to overlap between communities we need to find {\em edge partitions}.
%we will explain below in Section~\ref{sec:previouswork} that since we need to allow the dense subgraphs to overlap substantially, the SVD-based %methods do not seem to suffice.

Eppstein, L\"{o}ffler, and Strash~\cite{EppsteinLS10} show how to provably find all maximal cliques in
time that is exponential in the ``degeneracy" of the network, which is bounded by the maximum degree in the worst case.
%Eppstein and Strash~\cite{EppsteinS11} find that the algorithm runs well in practice on up to a few million nodes.
Our network model, however, allowed graphs with arbitrary degeneracy and also works for concepts of community more general than a clique.
%  The degeneracy of a graph is the smallest number $m$ such that every subgraph contains a node with degree at most $m$, and in particular is bounded by the maximum degree.

Mishra et al.~\cite{MishraSST07}  also study overlapping communities in social networks.  They show a simple and elegant algorithm for detecting overlapping communities in a certain parameter space.  Their algorithm works best for communities where the overlap is not too large.  In our parameter space, to detect a community $C$ with density $\alpha$ and gap $\epsilon$,  they require that some $v \in C$ has fewer than $(\alpha+\epsilon-1)|C|$ neighbors outside of $C$. This is a strong restriction of the amount of overlapping, and
% which is not realistic for small $\epsilon$ and anyways
is impossible for small $\epsilon$ when $\alpha$ is bounded away from 1.

%\footnote{for more details, see section~\ref{sec:previouswork}}.

\paragraph{Related independent work: } Balcan et al.~\cite{BBBCT}
 have independently studied the problem of infering overlapping communities.
They have a very different starting point in terms of an explanatory model of how network ties are formed via a preference/ranking function among the
individuals. Surprisingly, they ultimately arrive at very similar set of ``minimal'' assumptions and algorithms for infering the communities. Perhaps this convergence is some kind of validation of both approaches.

\section{When Communities have Similar Sizes}
\label{sec:similarsize}

This section will give very efficient algorithms to find all communities in the graph when the following assumption holds:

%\medskip
{\em Assumption: Each community has size between $\delta k$ and $k$
where $\delta >0$ is some constant and $k$ is arbitrary but known to the algorithm.}

We continue to make the three assumptions made in Section~\ref{subsec:assumptions}, and the parameters $n, \gamma, d, \epsilon$ are as defined there.  To emphasize, the communities can be arbitrary sets in the graph so  long as the gap assumption is not violated and each node is in at most $d$ communities.  Furthermore, the placement of ``ambient" (i.e., non-community) edges in the graph can be adversarial as long as it does not violate the assumptions.
%Moreover, the gap assumption, the assumption that communities are cliques, and the assumption that communities are roughly %the same size will all be relaxed in future models.

 The running time is exponential in
$1/\delta$, so one would not use these algorithms if communities have radically different sizes; that case is handled in
Section~\ref{sec:differentsize}.

\subsection{Warmup: Communities are Cliques}
\label{sec:cliques}
In this section, ``community'' is understood to be synonymous with ``clique'' which corresponds to all affinity values $p_u=1$.
%We call this Model A.
%the communities/cliques are assumed to be roughly of similar size.
\iffalse
\begin{model}{A}{Cliques of similar size}
A graph $G$ with $n$ nodes and a set of communities $\mathcal{C}$ with maximum overlap $d$ is consistent with Model A with parameters $(n, k, d, \delta, \epsilon, \gamma)$ if it satisfies Assumptions 1, 2' , and 3 from Section~\ref{subsec:assumptions} as well as the following:

\medskip

$\star$ Each community $C\in \mathcal{C}$ is a clique of size between $\delta k$ and $k$. %\label{modelA:CommIsClique} }
%We have the following requirements about the structure of the communities that we are trying to find:
%\begin{enumerate}

  %\item{Each node is in at most $d$ communities. \label{modelA:atmostd}}
%  \item{A $\gamma$ fraction of each node's edges are from its membership in communities. (Thus, communities
 %  explain a significant fraction of the graph.)\label{modelA:gamma}}
  %\item{Each node $v$ not in community $C$ has edges to at most $1-\epsilon$ fraction of  $C$'s members.\label{modelA:gap}}
%\end{itemize}
%We also have the following restrictions on the graph:
%\begin{itemize}
%\end{enumerate}
%Recall that we assume the Gap assumption and the Duck assumption of all our models (see Section~\ref{sec:defs}).
\end{model}
\fi

%Let there be $n$ people, each person is in at most $d$ communities, and each community has between $\delta k$ and $k$ members.  Everyone in each community has a link to everyone else in that community.  Additional edges may be present in the graph, and we require that at least a $\gamma$ fraction of each individual's edges are from community membership.  Additionally, we assume that if an individual $v$ is not a member of some community $C$, then $v$ knows at most a $1-\epsilon$ fraction of the members of $C$ (this restricts the layout of communities as well as the additional edges).

The algorithm focuses on the neighborhood $\Gamma(v)$ of a node $v$, and takes a random sample of nodes
$S$ from it. Then it uses brute force (or any other suitable heuristics) to find cliques of size about $\log d$  in the graph induced on $S$, and tries to extend them to communities. (To use sociology terms, here
{\em egocentric} or node-based analysis leads to provably correct {\em socio-centric} or societal analysis.)
The running time is linear in $n\cdot k$, albeit with a big
``constant'' factor term dependent upon $d,\delta, \epsilon, \gamma$.

\begin{theorem}
\label{thm:fastsamesizeclique}
%Given a Model A graph with parameters $(n, k, d, \delta, \epsilon, \gamma)$ for $k \geq 3$,
Given a graph satisfying the assumptions in this section, the {\sc Clique-Community-Find-Algorithm}
%below
outputs each community with probability at least $2/3$ in time \footnote{In this paper $\tilde{O}$ hides polynomial terms of the parameters $\delta$, $\gamma$, $\epsilon$ and also $\beta$, $\alpha$ if they are relevant.}
%$O(nkd^{Q\log R d})$ where $Q, R$ depend upon $\delta, \gamma, \epsilon$.
 %O(nk/\delta\gamma) (d\log(12d/\epsilon\delta\gamma)/\epsilon\delta\gamma)^{2\log(12d/\epsilon\delta\gamma)/\delta\epsilon} =
$O(nk/\delta\gamma) 2^{\tilde{O}(\log^2 d)}$. %(12d/\epsilon\delta\gamma)^{\frac{3d}{\gamma \delta \epsilon}}$.
\end{theorem}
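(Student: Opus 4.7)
The plan is to have the algorithm iterate over all $v \in V$; for each $v$ it samples a set $S \subseteq \Gamma(v)$ of size $m = \tilde{O}(d/(\delta\gamma))$, enumerates every $t$-subset $K \subseteq S$ that is a clique in $G[S]$ (with $t = \tilde{\Theta}(\log d / \epsilon)$), and for each such $K$ computes $T := \{x \in \Gamma(v)\cup\{v\} : K \subseteq \Gamma(x)\}$; after a thresholding post-filter, if the surviving set is a clique of size in $[\delta k, k]$, it is output. Fix an arbitrary community $C$ and any $v \in C$; I will argue that this procedure outputs $C$ with probability at least $2/3$.

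First I would bound the density of $C$ inside $\Gamma(v)$. Because $v$ lies in at most $d$ communities of size at most $k$, Assumption~3 gives $|\Gamma(v)| \leq dk/\gamma$, while because $C$ is a clique $|C \cap \Gamma(v)| = |C|-1 \geq \delta k - 1$. So a uniformly random neighbor of $v$ lies in $C$ with probability at least $\delta\gamma/(2d)$, and choosing $m = \Theta(td/(\delta\gamma))$ suffices (Chernoff) to guarantee $|S \cap C| \geq t$ with probability $9/10$. Since $C$ is a clique, every $t$-subset of $S \cap C$ is itself a clique in $G[S]$ and so gets enumerated.

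Second, I would argue that the candidate set $T$ produced from a uniformly random $t$-subset $K \subseteq C\setminus\{v\}$ is close to $C$. Trivially $C \subseteq T$. For any $w \in \Gamma(v) \setminus C$ the Gap Assumption with $\alpha = 1$ gives $|\Gamma(w) \cap C| \leq (1-\epsilon)|C|$, so $\Pr[K \subseteq \Gamma(w)] \leq (1-\epsilon)^t$. With $t = \Theta(\epsilon^{-1}\log(d/(\delta\gamma\epsilon)))$ the expected size of $T \setminus C$ is at most $\epsilon\delta k/10$, and Markov gives $|T \setminus C| \leq \epsilon\delta k/3$ with good probability. Then the post-filter keeping $x \in T$ with $|\Gamma(x) \cap T| \geq (1-\epsilon/3)|T|$ recovers $C$ exactly: every $c \in C$ passes because $C \subseteq T$ is a clique, while every $w \in T \setminus C$ has $|\Gamma(w) \cap T| \leq (1-\epsilon)|C| + |T \setminus C| \leq (1-2\epsilon/3)|C|$, which falls strictly below $(1-\epsilon/3)|T|$.

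For the running time, the outer loop contributes a factor of $n$; the enumeration of $t$-subsets contributes $\binom{m}{t} = 2^{\tilde{O}(\log^2 d)}$, since both $\log(m/t)$ and $t$ are $\tilde{O}(\log d)$; and each enumerated $K$ costs $O(|\Gamma(v)|\cdot t + k^2) = O(dkt/\gamma)$, whose factors of $d$ and $t$ fold into the exponential, leaving $O(k/(\delta\gamma))$ per enumeration and reproducing the claimed $O(nk/(\delta\gamma))\cdot 2^{\tilde{O}(\log^2 d)}$. The delicate part is justifying that some enumerated $K$ can legitimately play the role of a ``uniform random $t$-subset of $C$'' in the union bound of the second step, since the algorithm enumerates deterministically rather than drawing $K$ at random. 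The cleanest resolution is to note that under a uniformly random ordering of the sampled $S$, the first $t$ elements lying in $C$ form a uniformly random $t$-subset of $C \setminus \{v\}$; success for this particular $K^{\star}$ is all that is needed, and $K^{\star}$ is certainly among the cliques the algorithm enumerates in $G[S]$.
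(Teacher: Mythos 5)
Your argument is, in substance, the paper's own proof: sample the neighborhood $\Gamma(v)$, enumerate small cliques in the sample, use the Gap Assumption to show that the common neighborhood $T$ of a random small sub-clique of $C$ exceeds $C$ by only an $O(\epsilon)$ fraction, and then recover $C$ exactly by a degree threshold inside $T$. The paper takes $U=S\cap C$ wholesale rather than a $t$-subset, but your ``random ordering of $S$'' device for extracting a uniform $t$-subset $K^{\star}$ of $C\setminus\{v\}$ is a legitimate (and slightly cleaner) way to get the same $(1-\epsilon)^{t}$ bound, and your threshold calculation matches the paper's $(1-\epsilon/2)|V'|$ step. The core correctness argument is fine.

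The one genuine discrepancy is in the running time. You iterate over \emph{all} $n$ vertices, whereas the paper draws only $9n/(\delta k)$ random starting nodes (losing only a constant in the success probability, since each community is hit with constant probability). This matters because the degree-threshold post-filter costs $\Theta(|T|^2)=\Theta(k^2)$ per enumerated candidate in the worst case; your step ``$O(|\Gamma(v)|\cdot t + k^2)=O(dkt/\gamma)$'' silently assumes $k=O(dt/\gamma)$, which fails for large communities. With all $n$ starting nodes the total becomes $n k^2\cdot 2^{\tilde{O}(\log^2 d)}$, a factor of roughly $\delta\gamma k$ worse than the claimed $O(nk/\delta\gamma)\cdot 2^{\tilde{O}(\log^2 d)}$. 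The fix is exactly the paper's: subsample $O(n/(\delta k))$ starting nodes so that the $k^2$ per-node cost amortizes to $O(nk/\delta)$ overall. (A smaller point: the paper's final acceptance test also checks that every node outside the candidate has at most a $1-\epsilon$ fraction of edges into it, which guards against false positives; your version omits this, which is harmless for the stated guarantee but worth keeping.)
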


%Now we describe the algorithm.
% and interlace it with italicized claims that are easily checked to hold with high probability
%via a simple application of Chernoff bounds.
The intuition behind why sampling works is that
for any node $v$ the neighborhood $\Gamma(v)$  has size at most $kd/\gamma$ as communities have size at most $k$ .
Each community  $C$ containing $v$ lies within $\Gamma(v)$ and has size at least $\delta k$, which is at least a $\delta \gamma/d$ fraction of
$\Gamma(v)$. Thus a random sample of $\Gamma(v)$ will have many representatives from $S$.
The only subtlety is to watch out for ``false positives'': sets that are not cliques but may present themselves as such during sampling.

\medskip

\noindent{\sc Clique-Community-Find-Algorithm}
\begin{algorithmic}[1]
\STATE Pick $\frac{9n}{\delta k}$ ``starting nodes'' uniformly at random and do the following:
\STATE For each starting node $v$ randomly sample $S \subseteq \Gamma(v)$  by including each node with probability
$p = \log(12d/\epsilon\delta\gamma)/\delta \epsilon k$. Proceed only if the sample has size
at most three times the expectation $\frac{deg(v)\log(12d/\epsilon\delta\gamma)}{\delta \epsilon k} \le \frac{d \log(12d/\epsilon\delta\gamma)}{\gamma \delta \epsilon} $.
\FORALL{cliques $U$ of size at most $2pk$ in the induced graph $G(S)$ on $S$ } \label{line:clique:fastsamesize:selectclique}
\STATE Let $V'$ be the set of nodes in $\Gamma(v)$ which are connected to all nodes in U, and let $G(V')$ be the induced subgraph on $V'$ \label{line:clique:fastsamesize:almostclique}
\STATE Let $U'$ be the set of nodes in $G(V')$ whose  degree in this subgraph is at least $(1-\epsilon/2)|V'|$. Output $U'$ if it is a clique of size at least $\delta k$, and for all $v\not\in U'$, $v$ is connected to at most a $1-\epsilon$ fraction of $U'$. \label{line:clique:fastsamesize:trim}
\ENDFOR
%\ENDFOR
\end{algorithmic}
%\medskip

%Clearly, the algorithm only outputs maximal cliques of size at least $\delta k$, where every node outside the clique is not connected to $\epsilon$
%fraction of the nodes in it. Under our assumptions each one of these
%corresponds to a community. We show that all communities are found with high probability.

%Thus $S \cap C$ will be one of the cliques in the induced graph on $S$, and it is simple to extract $C$ from it
%%y choosing nodes that have edges to at least  $1-\epsilon/2$ fraction of nodes in this clique.
%This almost works except for a few ``noisy'' nodes that have to be cleaned up.
%Now we present the full algorithm.

\begin{proof}(Theorem~\ref{thm:fastsamesizeclique})
 For any community $C$ the probability that a randomly chosen starting node $v$ belongs to it is at least $\delta k/n$. Thus
the expected number of times we pick a starting node from $C$ is at least $9$ and so by a Markov bound such a node is selected with probability $1/9$. We show that in each such trial the probability that community $C$ is output is at least $7/9$.

So let $v \in C$. Simple calculation based upon Chernoff bounds shows that each of the following sequence of three statements
holds with high probability.
(i) the subsampling gives a sample $S$ of size at most thrice the expectation.
%step~\ref{line:clique:fastsamesize:subsample} succeeds.
% (actually it is much smaller because of Chernoff bound).
(ii) $1/\epsilon \log (12 d/\epsilon\gamma\delta) \leq |S \cap C| \leq 2pk$.  Note that $S \cap C$ is a clique of $G(S)$.
Now  consider what happens when the for-loop of step~\ref{line:clique:fastsamesize:selectclique} tries $U = S \cap C$.
Since every node in $C$ has an edge to every node in $U$, the set $V'$ will contain $C$.
 (iii) $|V'| \leq |C| + \epsilon |C|/4$. This follows since by the Gap Assumption that
 each $u \in \Gamma(v) \setminus C$ has edges to at most a
$(1-\epsilon)$ fraction of nodes in $C$, and thus the probability that it has edges to each node in the random subset
$S \cap C$ is less than $\epsilon$. Also, the size of $\Gamma(v)$ is at most $kd/\gamma$, so in expectation the number of nodes in $V'\backslash C$ is only $(1-\epsilon)^{|S\cap C|}\cdot kd/\gamma \le \epsilon |C|/12$.

%we can select $U$ to be the intersection of $C$ and remaining nodes (these nodes definitely form a clique because $C$ itself is a clique). Clearly all nodes in $C$ will be in $V'$ (step~\ref{line:clique:fastsamesize:almostclique}) because $C$ is a clique. We shall show that $V'$ essentially only contains nodes in $C$.

\iffalse
For any individual $u\in \Gamma(v)\backslash C$, by property 5 of Model A we know it is only connected to at most a $(1-\epsilon)$ fraction of nodes in $C$. There are $\epsilon |C|$ nodes in $C$ that are not connected to $u$, each of them is removed with probability $1-p$, thus the probability that all of them are removed is

$$(1-p)^{\epsilon|C|} \le (1-p)^{\epsilon\delta k}  \le \epsilon\delta\gamma/12d.$$

If not all of these nodes are removed, then $u$ cannot be in the set $V'$ generated at step~\ref{line:clique:fastsamesize:almostclique}. Thus the expected size of $V'\backslash C$ is only an $\epsilon\delta\gamma/24d$ fraction of the degree of $v$. But the degree of $v$ is at most $kd/\gamma$, thus $\E[|V'\backslash C|] \le \epsilon \delta k /12 \le \epsilon |C|/12$. Again by Markov's Inequality we know the probability that $|V'\backslash C| \ge \epsilon|C|/4$ is at most $1/3$.
\fi

However, if $|V'\backslash C| \le \epsilon|C|/4$, then we can identify nodes in $C$ just by their degree in $G(V')$!
% in $V'$ as in step~\ref{line:clique:fastsamesize:trim}.
Each $w\in C$ has degree is at least $|C| \ge (1-\epsilon/2) |V'|$; whereas each  $w\not\in C$ has at most
% $1-\epsilon$ fraction of edges to $C$  its degree is at most
$|C|(1-\epsilon)+|V'\backslash C| \le (1-3\epsilon/4)|C| < (1-\epsilon/2) |V'|$.
%Therefore in this case (which happens with probability $1/3$) the algorithm indeed finds the community $C$.
Thus the algorithm returns exactly $C$.
\iffalse
Again by union bound we can show that each clique is found with probability $1-2^{-n}$. To bound the running time of the algorithm, clearly everything besides the two loops runs in time linear in the number of edges, the outer loop has $3n^2/\delta k$ iterations, the inner loop has at most $2^{\frac{d \log(12d/\epsilon\delta\gamma)}{\gamma \delta \epsilon}}$ iterations. Thus the running time would be $O(|E|n^2/\delta k)(12d/\epsilon\delta\gamma)^{\frac{3d}{\gamma \delta \epsilon}}$.
\fi
\end{proof}

\noindent{\bf A practical note:}
In step~\ref{line:clique:fastsamesize:selectclique} we enumerate over all cliques of a certain size. With a slight parameter modification we can show it suffices to enumerate over all maximal cliques of this size, for which in practice one may be able to use existing heuristic algorithms and reduce running time.

Namely, in the proof we pick $p =  \frac{\log(24d\log (d/\epsilon\delta\gamma)/\epsilon\delta\gamma)}{\delta \epsilon k}$, and still take $U$ to be $S\cap C$. Then Chernoff bound and union bound show that the probability that there is a node in $S\backslash C$ that connects to every node in $S\cap C$ is small. So in this case $S\cap C$ is a maximal clique in $G(S)$. %Enumerating over all maximal cliques is generally much faster than enumerating over all cliques in practice.

\subsection{When Communities are Dense Subgraphs}
\label{sec:dense}
%{\sc omit this discussion now in light of Section~\ref{subsec:assumptions}?}

In the previous section, we equated ``community'' with ``clique'', as has been done in many previous works. This assumes that everybody knows everybody else in a ``community''---clearly  a strong assumption as networks get larger (or even in smaller networks where data about adjacencies is incomplete).

In this section, we model a community as a dense subgraph $G(k, \alpha)$ which corresponds to all affinity values $p_u=\sqrt{\alpha}$. All sets have the same affinity $\sqrt{\alpha}$; though this will be relaxed later. If nodes $u, v$ are in
more than one community, their affinity is still the same and $e_{u,v} = \alpha$.
% in the next algorithm.

%{\sc sanjeev's note: do we need to define Model B now? I prefer to minimize the amount of assumptions.}

 The description of the algorithm will use the following notion, which every community necessarily satisfies.

%define $(\alpha, \alpha-\epsilon)$ set, which is an easy-to-check necessary condition for communities:

\begin{definition} \label{def:alphaepsilon} For $\alpha, \epsilon>0$ an $(\alpha, \alpha-\epsilon)$-set is a subset of nodes such that 1) every node in the set has edges to at
least an $\alpha$ fraction of nodes in the set; and 2) every outside node has edges to at most an $\alpha-\epsilon$ fraction of nodes in the set.
\end{definition}

\begin{theorem}
\label{thm:samesizedense}
Given a graph satisfying the conditions of this section with  %a Model B graph $G$ with parameters $(n, k, d, \alpha, \delta, \epsilon, \gamma)$ for
$k \gg \log n$,  \\the Dense-Community-Find-Algorithm outputs each community in $\mathcal{C}$ with probability at least \\$1-\exp(-\Omega(\alpha^2\epsilon^2 \delta k))$ over the randomness of $G$ and $2/3$ over the randomness of the algorithm  in time $O(n \cdot (k/\gamma\alpha\delta)\cdot 2^{\tilde{O}(\log^2 d)})$.%${O(\frac{2d \log(30d/\alpha\epsilon\delta\gamma)}{\alpha^2\delta\epsilon^2\gamma })})$.

%Let graph $G$ with $|V| =n$ be a random graph generated as follows: There is a matrix $E= \{e_{i,j}\}$ that specifies the probability of each edge being present, each undirected edge $(i,j)$ is present in $G$ with probability $e_{i,j}=e{j,i}$ and these events are independent. The matrix $E$ satisfies: 1) each node is in at most $d$ communities (so that $e_{i,j}$ when $i$, $j$ are in the same community is $\alpha$),  2) each community has between $\delta k$ and $k$ members, 3) at least a $\gamma$ fraction of each individuals expected number of edges are from community membership, 4) and if an individual $v$ is not a member of some community $C$, then the sum of $e_{u,v}$ for $u\in C$ is at most a $\alpha-\epsilon$ fraction of $|C|$ .  We also assume that $d \geq 2$ and that $k \gg \log n$. Then Clique-Community-Find-Algorithm will output each community of $G$ with probability at least $1-\exp(-\Omega(\alpha^2\epsilon^2 \delta k))$, and will run in time $\frac{3n^2}{\delta k}(O(|E|n) + (k d/\gamma)^{\frac{12d}{\gamma\alpha\delta\epsilon^2}})$.
\end{theorem}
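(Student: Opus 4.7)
The plan is to adapt the clique-case algorithm of Section~\ref{sec:cliques} to the dense-subgraph setting, replacing "enumerate small cliques in the sample" by "enumerate small subsets of the sample." Concretely, I would pick $\Theta(n/\delta k)$ starting nodes $v$; for each, form a random sample $S\subseteq\Gamma(v)$ by including each neighbor independently with probability $p$ chosen so that $\E|S\cap C|\ge 2\ell$ while $\E|S|\le\tilde O(d/\gamma\delta\alpha)$, where $\ell=\Theta(\log(d/\gamma\delta\epsilon)/\alpha^2\epsilon^2)$. Then, for every $\ell$-subset $U\subseteq S$, define $V'=\{w\in\Gamma(v):|\Gamma(w)\cap U|\ge(\alpha-\epsilon/2)|U|\}$, and within $V'$ extract the set $U'$ of nodes whose degree in $G(V')$ is at least $(\alpha-\epsilon/2)|V'|$; output $U'$ when it is an $(\alpha,\alpha-\epsilon)$-set of size at least $\delta k$.

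The analysis splits into three pieces. First, a Markov-style argument shows that for each community $C$ some $v\in C$ is picked as a starting node in at least $\tfrac{1}{9}$ of cases, and Chernoff gives $|S\cap C|\ge\ell$ with high probability, so at least one enumerated subset $U$ lies entirely in $C$. Second, conditioned on such a good $U\subseteq C$, one applies Chernoff twice: for $w\in C$, $\E|\Gamma(w)\cap U|=\alpha|U|$, concentrating above $(\alpha-\epsilon/4)|U|$; for $w\notin C$, the Gap Assumption gives $\E|\Gamma(w)\cap U|\le(\alpha-\epsilon)|U|$, concentrating below $(\alpha-\epsilon/2)|U|$ except with probability $\exp(-\Omega(\ell\epsilon^2))$. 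The choice of $\ell$ and a union bound over the at most $kd/\gamma$ nodes of $\Gamma(v)$ then give $V'\supseteq C$ and $\E|V'\setminus C|\le\epsilon|C|/4$. Third, the degree-thresholding cleanup (identical in spirit to step~\ref{line:clique:fastsamesize:trim} of the clique algorithm) exactly recovers $C$: every $w\in C$ has degree $\ge\alpha|C|\ge(\alpha-\epsilon/2)|V'|$ in $G(V')$, whereas every $w\in V'\setminus C$ has degree at most $(\alpha-\epsilon)|C|+|V'\setminus C|<(\alpha-\epsilon/2)|V'|$.

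The main obstacle, relative to the clique case, is that all separation gaps are now only $\Theta(\epsilon)$ in a density parameter rather than a hard yes/no distinction, so each application of Chernoff must be amplified by taking a seed of size $\tilde\Omega(\log d/\alpha^2\epsilon^2)$; this is exactly what drives the enumeration cost to $|S|^{\ell}=d^{\tilde O(\log d)}=2^{\tilde O(\log^2 d)}$ per starting node, matching the stated running time once one adds the polynomial-in-$k$ work of computing $V'$ and $U'$. A secondary obstacle is that even before the algorithm runs, the community subgraphs themselves are random and might deviate from the regular dense-graph behavior we exploit (e.g., a node in $C$ having fewer than $\alpha|C|$ internal neighbors). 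Here one uses that $|C|\ge\delta k$ and $k\gg\log n$, so a Chernoff bound on each internal or crossing degree, together with a union bound over $O(n)$ relevant nodes, consolidates all such failure events into the single $\exp(-\Omega(\alpha^2\epsilon^2\delta k))$ "randomness-of-$G$" error term in the theorem, leaving the rest of the argument purely deterministic given a typical $G$.
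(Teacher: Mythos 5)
There is a genuine gap: your argument silently assumes $C\subseteq\Gamma(v)$, which holds for cliques but fails here. In the dense setting the starting node $v$ is adjacent to only about $\alpha|C|$ members of its community, so your sample $S\cap C$ is a sample of $\Gamma(v)\cap C$, not of $C$, and your set $V'=\{w\in\Gamma(v):\dots\}$ lives inside $\Gamma(v)$ and cannot contain $C$. The claim ``$V'\supseteq C$'' is therefore false, the degree arithmetic in your cleanup step is off (a node of $C$ has degree about $\alpha|V'|\approx\alpha^2|C|$ in $G(V')$, not $\alpha|C|$), and the final test ``output $U'$ if it is an $(\alpha,\alpha-\epsilon)$-set of size $\ge\delta k$'' would actually \emph{reject} the correct candidate: $U'\approx\Gamma(v)\cap C$ has size only about $\alpha\delta k$, and the nodes of $C\setminus\Gamma(v)$ sit outside $U'$ with roughly an $\alpha$ fraction of edges into it, violating condition 2 of Definition~\ref{def:alphaepsilon}. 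The paper's algorithm fixes exactly this with a third stage you omit: after isolating $U'\approx\Gamma(v)\cap C$ it forms $U''$, the set of \emph{all} nodes with more than an $\alpha-\epsilon/2$ fraction of edges into $U'$, and shows $U''=C$; the verification is then done against the weaker $(\alpha-\epsilon/8,\alpha-7\epsilon/8)$ thresholds rather than $(\alpha,\alpha-\epsilon)$.

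A second, related omission: because the reference set is $\Gamma(v)\cap C$ rather than $C$, one must justify that a node's edge density into $\Gamma(v)\cap C$ tracks its density into $C$. The paper encodes this as the codegree half of ``well-formedness'' ($|\Gamma(u)\cap\Gamma(v)\cap C|$ within $1\pm\epsilon/8$ of expectation for all relevant $u,v$), which is what consumes the $\exp(-\Omega(\alpha^2\epsilon^2\delta k))$ randomness-of-$G$ budget; your paragraph on the randomness of $G$ only covers per-node internal degrees, which is not enough. (Also, minor: you cannot get $\Gamma(v)\cap C\subseteq V'$ exactly by a union bound with a seed of size independent of $k$ --- the paper only controls the \emph{symmetric difference} of $V'$ and $\Gamma(v)\cap C$ in expectation and applies Markov, which suffices.) Your overall sampling-and-enumeration skeleton, the choice of seed size $\tilde\Theta(\log d/\alpha^2\epsilon^2)$, and the resulting $2^{\tilde O(\log^2 d)}$ enumeration cost do match the paper; the missing piece is recognizing that one starting node only ever ``sees'' an $\alpha$ fraction of its community and adding the extension step that recovers the rest.
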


\begin{proof}
We consider the following algorithm:

{\sc Dense-Community-Find-Algorithm}

\begin{algorithmic}[1]
\STATE Randomly choose $100n/\delta k$ nodes as starting nodes, for each starting node $v$ repeat the following
\STATE Let $G(\Gamma(v))$ be the induced subgraph of $v$ and her neighbors.
\STATE Subsample this subgraph by including each node with probability $p = \frac{2 \log(30d/\alpha\epsilon\delta\gamma)}{\alpha^2\delta\epsilon^2 k}$. Fail this round if this graph has size more than three times the expectation.
\FORALL{sets $U$ of nodes in the subsampled graph of size at most $2pk$} \label{line:dense:samesize:selectset}
\STATE Let $V'$ be the set of nodes in $\Gamma(v)$ which are connected to at least an $\alpha-\epsilon/2$ fraction of all nodes in U, let $G(V')$ be the induced subgraph on $V'$
\STATE Let $U'$ be the set of nodes in $G(V')$ such that their degree is at least $(\alpha-\epsilon/2)|V'|$ \label{line:dense:samesize:filter}
\STATE Let $U''$ be the set of nodes that has more than an $\alpha-\epsilon/2$ fraction of edges in to $U'$. Keep $U''$ if it is a $(\alpha-\epsilon/8,\alpha-7\epsilon/8)$ set
\ENDFOR
\end{algorithmic}

To analyze the algorithm, we first assume the graph $G$ is ``well formed''. The graph $G$ is well formed if 1) the number of edges from any node $v$ to any community $C$ is within $1\pm \epsilon/8$ of expectation. 2) For any node $u$, and node $v\in C$ in community $C$, the expected size of $\Gamma(u)\cap\Gamma(v)\cap C$ is within $1\pm\epsilon/8$ of expectation. In particular, we know if $u\in C$, $|\Gamma(u)\cap\Gamma(v)\cap C|\ge (\alpha-\epsilon/8)|\Gamma(v)\cap C|$; if $u\not\in C$, $|\Gamma(u)\cap\Gamma(v)\cap C|\le (\alpha-7\epsilon/8)|\Gamma(v)\cap C|$. %every community $C$ forms a $(\alpha-\epsilon/8, \alpha-7\epsilon/8)$ set. 2) the degree for any node does not exceeds two times its expectation. 3) for any two individuals $u$, $v$ in a community $C$, the number of their common neighbors in $C$ is at least $(\alpha-\epsilon/8)$ fraction of the number of neighbors of $v$ in $C$. That is, $|\Gamma(u)\cap\Gamma(v)\cap C|\ge (\alpha-\epsilon/8)|\Gamma(v)\cap C|$. 4) if $v$ is in community $C$ and $u$ is not in $C$, then the number of their common neighbors in $C$ is at most $(\alpha-7\epsilon/8)$ fraction of the number of neighbors of $v$ in $C$, $|\Gamma(u)\cap\Gamma(v)\cap C|\le (\alpha-7\epsilon/8)|\Gamma(v)\cap C|$.

Since all the requirements of well-formedness are $\epsilon$ far from their expectation, by Chernoff bound it is easy to show that the graph $G$ is well-formed with probability at least $1-\exp(-\Omega(\alpha^2\epsilon^2 \delta k))$.

Conditioned on $v\in C$ being selected, by Chernoff bounds we show the following statements hold with high probability: (i) the subsampling gives a sample of less than 3 times the expectation. (ii) If we choose $U$ to be the intersection of $\Gamma(v)\cap C$ and subsampled nodes, the number of edges from most nodes in $\Gamma(v)$ to $U$ will be close to expectation. (iii) The symmetric difference of $V'$ and $\Gamma(v)\cap C$ is at most $\epsilon|\Gamma(v)\cap C|/10$, because for any node in $\Gamma(v)$ to be in the symmetric difference its number of edges to $U$ will have to be $\epsilon/2$ away from expectation. Chernoff bound shows the probability of this is at most $\alpha\epsilon\delta\gamma/30d$, and $|\Gamma(v)| \le kd/\gamma$.

%If $v\in C$ is selected, we bound the probability that $C$ is detected. The subsampling step fails with probability at most $1/3$. Assume that does not happen, we take $U$ to be the intersection of the subsampled nodes and the community $C$. Notice that $U$ is also a uniform random subset of $C$. If $u\in C$, then the expected number of edges from $u$ to $U$ is $\alpha |U|$; if $u\not \in C$, the expected number of edges from $u$ to $U$ is at most $(\alpha-\epsilon)|U|$. For a node to be incorrect in $V'$ (meaning its in $V'$ but not $\Gamma(v)\cap C$ or vice verse) these numbers have to be %By Chernoff bound these numbers of edges will not be far from
%For a node $u$ to be in the symmetric difference of $\Gamma(v)\cap C$ and $V'$, its number of edges to the set $U$ has to be
% $\epsilon/2$ away from expectation. By Chernoff bound the probability is at most $\exp(-(\epsilon/2)^2 \alpha |U|)  \le \alpha\epsilon\delta\gamma/30d$. Since $|\Gamma(v)| \le kd/\gamma$, the symmetric difference of $V'$ and $\Gamma(v)\cap C$ has expected size at most $\epsilon|\Gamma(v)\cap C|/30$. With high probability the actual size is smaller than 3 times the expectation $\epsilon|\Gamma(v)\cap C|/10$.
Since the symmetric difference is so small, and $G$ is well formed, there will be a gap in degree for nodes in and outside $C$. For any $u\in |\Gamma(v)\cap C|$ the number of edges into $V'$ is at least $(\alpha-\epsilon/8-\epsilon/10)|\Gamma(v)\cap C|$; for any $u\not \in C$, the number of edges into $V'$ is at most $(\alpha-7\epsilon/8+\epsilon/10)|\Gamma(v)\cap C|$. Hence setting threshold at $\alpha-\epsilon/2$ suffices to distinguish the two cases. The set $U'$ is indeed a subset of $\Gamma(v)\cap C$ of size at least a $(1-\epsilon/10)$ fraction.

Finally, since the graph is well formed, any node $u\in C$ must have at least an $\alpha-\epsilon/8-\epsilon/10$ fraction of edges to $U'$, and any node $u\not\in C$ must have at most an $\alpha-7\epsilon/8+\epsilon/10$ fraction of edges to $U'$, again a threshold of $\alpha-\epsilon/2$ is enough to distinguish the two cases and $U'' = C$.

%For any $u\in \Gamma(v)\cap C$, as assumed we have $|\Gamma(u)\cap\Gamma(v)\cap C| \ge (\alpha-\epsilon/8)|\Gamma(v)\cap C|$, then the expected number of edges from $u$ to $V'\cap C$ is also at least $(\alpha-\epsilon/8)$ fraction of $|V'\cap C|$. The expected size of $V'\cap C$ is at least $p \alpha \delta k$, thus the probability that the fraction drops down to $(\alpha-\epsilon/4)$ is at most $\exp(-p\alpha\delta\epsilon^2 k) = (kd/\gamma)^{-2}$, since $\Gamma(v)\le 2dk/\gamma$ by union bound the probability that 2) is violated is small. Similarly the probability that 3) is violated is also small.

%As in Theorem~\ref{thm:samesizeclique} we know the probability that $C$ is not found conditioned on $v\in C$ is at most 2/3
%In conclusion, every time some $v\in C$ is selected we have a good probability of finding $C$, again we apply union bound to show that all communities are found with probability at least $1-2^{-n}$ (conditioning on the fact that $G$ is well formed).

The running time depends on the size of the subsampled nodes, which is of order $O(p\cdot kd/\gamma) = O(\frac{2d \log(30d/\alpha\epsilon\delta\gamma)}{\alpha^2\delta\epsilon^2\gamma })$. Thus the running time is $O(n (k/\alpha\gamma\delta)\cdot O(\frac{2d \log(30d/\alpha\epsilon\delta\gamma)}{\alpha^2\delta\epsilon^2\gamma })^{2pk}) = O(n \cdot (k/\alpha\gamma\delta) \cdot 2^{\tilde{O}(\log^2 d)})$.
\end{proof}

\subsubsection{Allowing Different Affinities}
In previous subsection we required edge probabilities $e_{u,v}$ to be exactly $\alpha$ if $u,v$ belong to the same community, and this probability does
 not rise even when they belong in more than one community.  % every edge of the graph $G$ to be independent. Also if $u, v$ belong to the same community, the probability that they are connected by an edge is exactly $\alpha$.
In real life these  requirements may be too stringent. Here we define the Dense-Similar-Size Assumptions which relax these two requirements.
In this model, the Dense-Community-Find-Algorithm may fail, and we give a new algorithm that, unfortunately, is less efficient.

\begin{model}{Dense-Similar-Size $(n, k, d, \alpha, \delta, \epsilon, \gamma)$}
%A graph $G$ with $n$ nodes and a set of communities $\mathcal{C}$ with maximum %overlap $d$ is consistent with Dense-Similar-Size Model  with parameters  if it satisfies

Communities satisfy Assumptions 1-3 from Section~\ref{subsec:assumptions} as well as the following:

%\medskip
$\star$ Each community $C\in \mathcal{C}$  has size between $\delta k$ and $k$
and is generated according to Assumption 1 with affinities $p_u \ge \sqrt{\alpha}$.%\label{modelA:CommIsClique} }

$\star$ If $u, v$ are in more than one community then edge has probability $e_{u,v}$ at least as large as the maximum requirement ($p_u p_v$) of all communities that they lie in.

\end{model}

\iffalse
{\sc This part talks about concentration, please move it to wherever it belongs}

The property~\ref{modelBstar:concentration} of Model B$^*$ is much weaker than the corresponding property of Model B.  Besides $G(k, \alpha)$ random graphs, several other graph generation models in the literature can be used to generate the edges within communities in such a way as to satisfy the models requirements, for example, the configuration model and the expected degree model~\cite{Jackson08}.  The first of these generates a multigraph with (nearly) any particular preassigned degree distribution.  The second generates a random graph where the expected degree of each node is (nearly) any preassigned value.  Both of these models can express the fact that some nodes have more participation with a community than other nodes, and thus we expect that they know more community members than do community members that participate with the community less.  This definition could also accommodate additional structure that introduces dependencies among the edges as long as there is still sufficient independence to satisfy property~\ref{modelBstar:concentration}.

{\sc End of this part}
\fi

%Again we stress that this model makes very few assumptions on the layout of %communities or the presents of ambient edges not within any community.

\begin{theorem}
\label{thm:robustsamesizedense}
Given a graph $G$ and a set of communities $\mathcal{C}$ consistent with Dense-Similar-Size Model with parameters $(n, k, d, \alpha, \delta, \epsilon, \gamma)$ where $d \geq 2$ and $k \gg \log n$, the {\sc Robust-Dense-Community-Find algorithm} below outputs each community in $\mathcal{C}$ with probability at least $1-\exp(-\Omega(\alpha^2\epsilon^2 \delta k))$ over the randomness of $G$ and probability at least $2/3$ over the randomness of the algorithm  in time $O(n \cdot (k/\alpha\delta\gamma)^{2\log(10/\epsilon)/\alpha+1}2^{\tilde{O}(\log^2 d)})$.%${O(\frac{d\log(10/\epsilon)\log(240\log(10/\epsilon)d/\alpha\epsilon\delta\gamma)}{\alpha^2 \delta \epsilon^2\gamma})}$.
\end{theorem}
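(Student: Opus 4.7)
The plan is to adapt the algorithm of Theorem~\ref{thm:samesizedense} to cope with heterogeneous affinities. The weak spot of the previous analysis is that when $p_u$ varies across $u \in C$, the set $V'$ built from a single random subsample $S \cap C$ no longer enjoys a uniform degree gap: the expected connectivity of a probe node $w$ to $U = S \cap C$ depends on which affinities happen to land in $S$, so one cannot argue that a single $U$ simultaneously separates every $w \in C$ from every $w \notin C$. The fix is to replace ``one sample'' by exhaustive enumeration of small certificate sets.

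Concretely I would (i)~pick $O(n/(\delta k))$ random starting nodes so that each community $C$ contains a selected $v$ with good probability; (ii)~for each $v$, subsample $\Gamma(v)$ as in Theorem~\ref{thm:samesizedense} to reduce the neighborhood while still retaining $\Omega(\log(1/\epsilon)/\alpha)$ members of each $C \ni v$, which is the source of the $2^{\tilde O(\log^2 d)}$ factor; (iii)~inside the surviving neighborhood, enumerate all subsets $U$ of size $t = 2\log(10/\epsilon)/\alpha + 1$; (iv)~for each such $U$, define $V' = \{w \in \Gamma(v) : w \text{ is adjacent to at least } (\alpha - \epsilon/2)|U| \text{ members of } U\}$ and run the degree-refinement and verification steps of Theorem~\ref{thm:samesizedense} on $V'$; (v)~output any candidate that passes the $(\alpha-\epsilon/8, \alpha-7\epsilon/8)$-set test. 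The well-formedness event for $G$ is the same as before and holds with probability $1-\exp(-\Omega(\alpha^2\epsilon^2\delta k))$ by Chernoff.

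The key lemma is that for at least one enumerated $U \subseteq C \cap \Gamma(v)$ of size $t$, every $w \in C$ is adjacent to at least $(\alpha - \epsilon/4)|U|$ members of $U$ while every $w \in \Gamma(v)\setminus C$ is adjacent to at most $(\alpha - 3\epsilon/4)|U|$. The ``inside'' direction uses $p_w p_u \geq \alpha$ together with a standard Chernoff bound over the choice of $U$. The ``outside'' direction uses witnesses of non-adjacency: by the Gap Assumption each $w \notin C$ has at least $\epsilon|C|$ non-neighbors inside $C$, so a uniformly random $U \subseteq C \cap \Gamma(v)$ of size $t$ misses all of them with probability at most $(1-\epsilon)^t \leq \epsilon/(10 \cdot kd/\gamma)$; union-bounding over the at most $kd/\gamma$ candidate $w \in \Gamma(v)\setminus C$ shows that the desired $U$ occurs among the enumerated subsets with high probability. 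Once such a $U$ is fixed, the refinement and verification steps of Theorem~\ref{thm:samesizedense} go through verbatim, since the heterogeneity of affinities has been bypassed by the explicit certificate rather than by averaging over a random sample.

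The main obstacle is supplying this separation using only $t = O(\log(1/\epsilon)/\alpha)$ probes: the gap in expected connectivity is $\epsilon t$, which is far too small for a direct Chernoff bound on edge counts. The workaround sketched above---reasoning about hitting a reservoir of non-neighbors rather than about deviations of edge counts around the mean---is precisely what forces the $\log(1/\epsilon)/\alpha$ exponent in the running time. Combining this certificate enumeration with the subsampling device from Theorem~\ref{thm:samesizedense} (which shrinks the effective enumeration base from $kd/\gamma$ to $k/(\alpha\delta\gamma)$ and absorbs the inner refinement into a $2^{\tilde O(\log^2 d)}$ factor) then yields the stated bound $O(n \cdot (k/\alpha\delta\gamma)^{2\log(10/\epsilon)/\alpha+1}\cdot 2^{\tilde O(\log^2 d)})$.
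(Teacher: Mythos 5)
Your proposal diverges from the paper's argument at the one point where the paper actually does something new, and the divergence breaks the proof. The paper enumerates sets $S\subseteq\Gamma(v)$ of size $T=2\log(10/\epsilon)/\alpha$ in order to pass to the \emph{two-hop} neighborhood $\Gamma(S)$: since $|\Gamma(u)\cap\Gamma(v)\cap C|$ concentrates, some $S\subseteq C\cap\Gamma(v)$ has $\Gamma(S)$ covering a $1-\epsilon/10$ fraction of $C$, so $\Gamma(S)\cap C$ is an essentially unbiased copy of $C$. This is the crux of the theorem: with unequal affinities, $\Gamma(v)\cap C$ is an affinity-weighted, biased sample containing only a constant ($\approx\alpha$) fraction of $C$, so a node $w\notin C$ whose edges into $C$ concentrate on the high-affinity (hence over-represented) members can clear any threshold measured relative to $\Gamma(v)\cap C$ while still being below $\alpha-\epsilon$ relative to $C$. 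Your algorithm never leaves $\Gamma(v)$ --- both $U$ and $V'$ live there --- so this failure mode is untouched. The exponent $2\log(10/\epsilon)/\alpha+1$ in the running time is the price of enumerating $S$ over the base $|\Gamma(v)|\le kd/\gamma$, not of enumerating small certificates $U$; you have spent the same budget on a different enumeration that does not do the needed work.

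The second gap is quantitative: your key lemma cannot hold with $|U|=t=O(\log(1/\epsilon)/\alpha)$. For the inside direction you need each of up to $k$ nodes of $C$ to have adjacency count in $U$ within an additive $\epsilon t/4$ of a mean of order $\alpha t=2\log(10/\epsilon)$; the per-node Chernoff failure probability is about $\exp(-\Theta(\epsilon^2 t/\alpha))=\exp(-\Theta(\epsilon^2\log(1/\epsilon)/\alpha^2))$, a constant, so no union bound over $C$ (or even over $\Gamma(v)$) is available. For the outside direction, the ``miss every non-neighbor'' event is the right event only when membership in $V'$ requires adjacency to \emph{all} of $U$, i.e.\ in the clique setting of Theorem~\ref{thm:fastsamesizeclique}; with the fractional threshold $(\alpha-\epsilon/2)|U|$ the relevant event is an upward deviation of $\epsilon t/2$ in an edge count, and in any case $(1-\epsilon)^{2\log(10/\epsilon)/\alpha}\approx(\epsilon/10)^{2\epsilon/\alpha}$ is close to $1$, nowhere near $\epsilon/(10kd/\gamma)$ (a reservoir of density $\epsilon$ needs $t=\Omega(\log(kd/\gamma)/\epsilon)$ probes, not $O(\log(1/\epsilon)/\alpha)$). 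The paper avoids both problems by keeping $U$ as a subsample of $\Gamma(S)$ with $\Theta(\log(Td/\epsilon\delta\gamma)/(\alpha\epsilon^2))$ points inside $C$ --- large enough for Chernoff at additive accuracy $\tfrac{\epsilon}{4}|U|$ plus a union bound over the $O(Tkd/\gamma)$ nodes of $\Gamma(S)$ --- and reserving the small-set enumeration for $S$, whose only job is coverage of $C$, for which the hitting argument $(1-\alpha+\epsilon/8)^T<\epsilon/10$ is the correct tool.
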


\begin{proof}
The previous algorithm may fail because in this model $\Gamma(v)\cap C$ is no longer a uniform subset of $C$ and can be biased. Thus for a vertex $u$ the fraction of edges into the set $\Gamma(v)\cap C$ may be quite different from the fraction of edges into $C$. The idea of the algorithm is that for any community $C$, there is always a set $S$ such that $\Gamma(S)$ contains a large ($\ge 1-\epsilon/10$) fraction of $C$.
A uniform sample on $\Gamma(S)\cap C$ will be similar enough to a uniform sample on $C$, and the number of edges into sample will be close to the expectation. This allows us to get a set that is very close to $\Gamma(S)\cap C$ and then extend it similarly as before. % After subsampling, there will be $O(\frac{\log(120Td/\epsilon\delta\gamma)}{\alpha \epsilon^2})$ nodes in $C$ that gets sampled. Now in the for-loop (line~\ref{line:dense:robust:inducedgraph}) we can assume $U$ to be the intersection of the subsampled nodes and $C$. These nodes are random over a large fraction of $C$, so by concentration most nodes outside $C$ should have fewer than $\alpha-3\epsilon/4$ fraction of edges to these nodes; while most nodes inside $C$ have more than $\alpha-\epsilon/4$ fraction of edges to these nodes. Therefore $V'$ should be a set that has very small symmetric difference with the community $C$. At step~\ref{line:dense:robust:trim} we trim the set $V'$ by a threshold on degree, so that only nodes within $C$ will remain inside, thus $U'$ will be a subset of $C$ that contains more than $1-\epsilon/10$ fraction of nodes. Finally $U''$ will be exactly equal to $C$.

{\sc Robust-Dense-Community-Find}
\begin{algorithmic}[1]
\STATE Let $T = 2\log(10/\epsilon)/\alpha$.
\STATE Randomly choose $100n/\delta k$ starting nodes, for each starting node $v$ repeat the following
\FORALL{sets of nodes $S\subseteq \Gamma(v)$ of size $T$}
\STATE Let $G(\Gamma(S))$ be the induced subgraph of $S$ and their neighbors. \label{line:dense:robust:inducedgraph}
\STATE Subsample this subgraph by including each node with probability $p = O(\frac{\log(120Td/\epsilon\delta\gamma)}{\alpha \delta \epsilon^2 k})$.  Fail this round if this graph has size more than three times the expectation.
\FORALL{sets $U$ of nodes in the subsampled graph of size at most $2pk$} \label{line:dense:robust:innerloop}
\STATE Let $V'$ be the set of nodes in $\Gamma(S)$ which are connected to at least an $\alpha-\epsilon/2$ fraction of all nodes in U, let $G(V')$ be the induced subgraph on $V'$
\STATE Let $U'$ be the set of nodes in $G(V')$ such that their degree is at least $(\alpha-\epsilon/2)|V'|$ \label{line:dense:robust:trim}
\STATE Let $U''$ be the set of nodes that has more than an $\alpha-\epsilon/2$ fraction of edges in to $U'$. Keep $U''$ if it is a $(\alpha-\epsilon/8,\alpha-7\epsilon/8)$ set
\ENDFOR
\ENDFOR
\end{algorithmic}

We call the graph $G$ {\em well formed} if the degree of each node and the number of edges from any node to any community are within $1\pm \epsilon/8$ multiplicative factor of their expectations, also for any $u,v\in C$ the size of their intersection in $C$ $|\Gamma(u)\cap\Gamma(v)\cap C|$ is within $1\pm \epsilon/8$ of the expectation. By concentration bounds and union bound, the probability that $G$ is well formed is at least $1-\exp(-\Omega(\alpha^2 \epsilon^2 \delta k))$. We shall assume $G$ is well formed in the discussions below.

%Now we follow the plan and prove a series of claims:
%We first show %\begin{claim}
For any community $C$, when some $v\in C$ is the starting node, %there is a set $S$ of size $T$ such that $\Gamma(S)$ contains at least $1-\epsilon/10$ fraction of nodes in $C$.
let $S$ be a random subset %This is by
%\end{claim}
%\begin{proof}
%randomly picking
of  $T$ nodes in $C\cap \Gamma(v)$. Since the size $|\Gamma(u)\cap \Gamma(v)\cap C|$ is concentrated for any $u,v\in C$ the probability that none of these $T$ nodes are adjacent to $u$ is at most $(1-\alpha+\epsilon/8)^T < \epsilon/10$. Thus the expected size of $\Gamma(S)\cap C$ is at least $(1-\epsilon/10)|C|$.
%\end{proof}

We fix a set $S$ such that $\Gamma(S)\cap C$ contains at least a $1-\epsilon/10$ fraction of $C$, and show that $C$ is found with good probability. With high probability the subsampling step returns a sample of size less than 3 times the expectation. %The probability that the subsampling step fails is at most $1/3$ by Markov's Inequality.
After sampling, fix $U$ to be %We prove the following claim that asserts $V'$ is very close to $C\cap \Gamma(S)$.
%\begin{claim}
%With probability at least $2/3$, there is a set $U$ such that the symmetric difference between $V'$ and $C\cap\Gamma(S)$ is at most $\epsilon|C|/20$.
%\end{claim}
%\begin{proof}
%Fix the set $U$ to be
the intersection of subsampled nodes and the community $C$. Then this $U$ is a uniform sample of the set $\Gamma(S)\cap C$. %(which as we proved has size at least $(1-\epsilon/10)|C|$).
For any node $v\in C$, the expected number of edges from $v$ to $U$ is at least $(\alpha-\epsilon/10-\epsilon/8)|U|$; for any node $v\not\in C$, the expected number of edges from $v$ to $U$ is at most $(\alpha-7\epsilon/8)|U|$. By Chernoff bound these values are $\epsilon/4$ away from expectation (and thus the node is in the symmetric difference of $V'$ and $C\cap \Gamma(S)$) with probability less than $\epsilon\gamma\delta/120Td$. The size of $\Gamma(S)$ is at most $2Tkd/\gamma$. With high probability the symmetric difference (of $V'$ and $\Gamma(S)\cap C$)has size smaller than $\epsilon |\Gamma(S)\cap C|/20$.

%Now we only need to show when we have a set $V'$ that has very small symmetric difference with $C\cap \Gamma(S)$, the $U'$ generated will be a large subset of $C$ and $U''$ will be exactly $C$.
Now since $V'$ is really close to $C\cap \Gamma(S)$, it is easy to check that for all vertices $u\in C\cap V'$, the degree in $V'$ is larger than $(\alpha-\epsilon/2)|V'|$; for all $u\in V'\backslash C$ the degree in $V'$ is smaller. %there is a gap in the number of edges to $V'$ between nodes that are in or outside $C$, % This is simple because if $v\in V'\backslash C$, $v$ can only be connected to at most $(\alpha-7\epsilon/8+\epsilon/20)|C|$ nodes in $V'$; however if $v\in V'\cap C$, $v$ must be connceted to at least $(\alpha-\epsilon/10-\epsilon/20)|C|$ nodes in $V'$,
Thus setting a threshold at $\alpha-\epsilon/2$ suffices to distinguish these two cases, it follows that $U' = V'\cap C$. %Since $U'$ is a subset of $C$ and has size at least $(1-\epsilon/10-\epsilon/20)|C|$, every node inside $C$ must be connected to at least $(\alpha-\epsilon/4-3\epsilon/20)|C|$ nodes in $U'$, every node outside $C$ can only be connected to at most $(\alpha-7\epsilon/8)|C|$ nodes in $U'$, thus
Now $U'$ is a large subset of $C$, all vertices $u\in C$ will have more than $(\alpha-\epsilon/2)|U'|$ edges to $C$ while all vertices $u\not\in C$ have less edges. Setting the threshold at $\alpha-\epsilon/2$ is again sufficient to distinguish the two cases and $U''=C$.

%In conclusion, if $S$ if fixed correctly every iteration has probability at least $1/3$ of detecting community $C$. There are $n$ iterations so the probability that a community is detected (conditioning on $G$ being well formed) is at least $1-2^{-\Omega(n)}$.

Finally, the running time of the algorithm depend on the size of the subsampled nodes, which is at most $2Tkd/\gamma \cdot p = O(\frac{Td\log(120Td/\epsilon\delta\gamma)}{\alpha \delta \epsilon^2\gamma})$. Thus the algorithm runs in time \\
$n (kd/\gamma\epsilon\delta)^{T+1} O(\frac{Td\log(120Td/\epsilon\delta\gamma)}{\alpha \delta \epsilon^2\gamma})^{2pk} = O(n \cdot  (kd/\gamma\epsilon\delta)^{2\log(10/\epsilon)/\alpha+1}\cdot 2^{\tilde{O}(\log^2 d)})$.
\end{proof}

Notice that although the algorithm works for only a fixed value of $\alpha$, if the communities have different densities we can also apply the algorithm with different $\alpha$ parameters to find all communities.

\section{When Communities may have Very Different Sizes}
\label{sec:differentsize}
When communities have very different sizes, the parameter $\delta$ for our models in Section~\ref{sec:similarsize} can be too small and the algorithms are not efficient. In this section we show we can relax the similar size requirement using a quasi-polynomial time algorithm. We can also find cliques of different sizes in polynomial time with some additional assumptions.

\subsection{Quasi-polynomial Time Algorithm for Communities of Different Sizes}

When we have quasi-polynomial time, we can find all communities that have at least constant density just using assumptions 1, 2 and 3. We only make sure that the minimum density we want to find is a constant $\alpha_{min}$, that is, each community satisfies Assumption 1 with smallest $p_u\ge \sqrt{\alpha_{min}}$.

\begin{theorem}
\label{thm:anysizedense}
Given a graph $G$ satisfying the assumptions above with parameters $(n, k, d, \alpha_{min}, \delta, \epsilon, \gamma)$, if all communities are $(\alpha_C-\epsilon/8, \alpha_C-7\epsilon/8)$ sets (which happens with high probability when the size of the communities are not too small) the {\sc Any-Size-Dense-Community-Find} algorithm will output all communities in $\mathcal{C}$   in time $n^{\frac{100\log (kd/\gamma)}{\alpha_{min}\epsilon^2}+3}$.
\end{theorem}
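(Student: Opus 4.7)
The plan is to lift the Robust-Dense-Community-Find algorithm of Section~\ref{sec:dense} to communities of arbitrary size by replacing local enumeration of the seed set with global enumeration. Specifically, set $T = O(\log(kd/\gamma)/\alpha_{min}\epsilon^2)$ and enumerate over all subsets $S \subseteq V$ of size $T$; this incurs a factor $\binom{n}{T} \le n^T$ which supplies the quasi-polynomial exponent claimed. For each such $S$, run essentially the inner procedure of Theorem~\ref{thm:robustsamesizedense}: form the induced subgraph on $\Gamma(S)$, subsample it with probability of order $\log(\cdot)/(\alpha_{min}\epsilon^2 k)$, enumerate over small candidate subsets $U$ of the sample, derive $V'$, then $U'$, and finally $U''$ by successive degree thresholding and extension, and accept $U''$ only if it is an $(\alpha-\epsilon/8, \alpha-7\epsilon/8)$ set.

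The first step is an existence lemma carried over essentially verbatim from the proof of Theorem~\ref{thm:robustsamesizedense}: for every community $C$ with density $\alpha_C \ge \alpha_{min}$, a uniformly random $T$-subset $S \subseteq C$ satisfies $|\Gamma(S) \cap C| \ge (1-\epsilon/10)|C|$ with positive probability, because for any $u \in C$ the probability that $S$ misses every neighbor of $u$ in $C$ is at most $(1 - \alpha_C + \epsilon/8)^T < \epsilon/20$ for our choice of $T$. Since the outer loop enumerates every $T$-subset of $V$, it will encounter at least one good $S$ for each community. Conditioned on $G$ being well-formed (which again holds with probability $1-\exp(-\Omega(\alpha_{min}^2\epsilon^2\delta k))$), the inner procedure recovers $C$ exactly by the same Chernoff-plus-thresholding arguments used in the similar-size case. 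Since $\alpha_C$ is not known, we sweep over $O(1/\epsilon)$ discretized values of the density threshold, which contributes only a polynomial factor.

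The main obstacle is eliminating spurious outputs: with $n^T$ seed sets we cannot tolerate even a moderate per-seed failure probability, so we must rely on a purely structural filter rather than on concentration over fresh randomness. This is exactly what the Gap Assumption combined with the explicit $(\alpha-\epsilon/8, \alpha-7\epsilon/8)$ verification supplies: any $U''$ that passes the check has every inside node connected to at least an $\alpha-\epsilon/8$ fraction of $U''$ and every outside node connected to at most an $\alpha-7\epsilon/8$ fraction, which by Assumption 2 together with well-formedness forces $U''$ to coincide with one of the true communities in $\mathcal{C}$. The final running time multiplies the outer enumeration $n^T$ by the polynomial inner work (subsampling plus the inner subset enumeration, which is $2^{\tilde O(\log^2 d)}$ times polynomial factors in $n, k$) and the $O(1/\epsilon)$ sweep, yielding the claimed bound $n^{100\log(kd/\gamma)/\alpha_{min}\epsilon^2 + 3}$.
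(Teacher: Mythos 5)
Your outer skeleton --- enumerate all $T$-subsets of $V$ with $T = O(\log(kd/\gamma)/(\alpha_{min}\epsilon^2))$, pay $n^T$ for it, sweep over $O(1/\epsilon)$ discretized density values, and argue by the probabilistic method that some enumerated $S$ is a good sample of each community --- is exactly the paper's approach. But the inner procedure you graft onto it creates a genuine gap. You retain the subsampling step of Theorem~\ref{thm:robustsamesizedense} with rate of order $\log(\cdot)/(\alpha_{min}\epsilon^2 k)$, where $k$ is the size of the \emph{largest} community. The entire point of this theorem is that communities may be far smaller than $k$ (the section opens by noting that the Section~\ref{sec:similarsize} algorithms break precisely because $\delta$ can be tiny); for a community $C$ with $|C|\ll k$ your subsample of $\Gamma(S)\cap C$ has expected size $\ll 1$, and the whole Chernoff-plus-thresholding chain through $V'$, $U'$, $U''$ collapses. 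The paper sidesteps this entirely: since the globally enumerated $S$ already has size $T$, it is itself a large enough statistical sample, so the algorithm simply takes $U$ to be all nodes with more than an $\alpha-\epsilon/4$ fraction of edges \emph{to $S$} and keeps $U$ if it is an $(\alpha,\alpha-\epsilon/2)$-set --- no restriction to $\Gamma(S)$, no subsampling, no inner enumeration, no dependence on the target community's size, and a deterministic rather than merely high-probability guarantee.

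Separately, your resolution of the ``spurious outputs'' obstacle is both unnecessary and incorrect. The theorem claims only that every community is output, not that nothing else is, so no structural filter against false positives is needed for correctness (and the paper proves none). Moreover, the claim that any $U''$ passing the $(\alpha-\epsilon/8,\alpha-7\epsilon/8)$ check is forced to coincide with a true community does not follow from Assumption 2: the Gap Assumption constrains how non-members connect to \emph{true} communities, not which sets qualify as communities. The paper itself points out, in the discussion following Theorem~\ref{thm:anysizeclique}, that the union of two overlapping communities is quite likely to be an $(\alpha,\alpha-\epsilon)$-set without being a community, so such a filter cannot certify community-hood.
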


\begin{proof}
When trying to apply previous ideas to this model, the difficulty is that communities have very different sizes and sampling will not find small communities. To solve the problem we just enumerate over all sets $S$ of size $T = \frac{100\log (kd/\gamma)}{\alpha_{min}\epsilon^2}$, think of all these points are chosen uniformly at random from a certain community $C$. This $S$ will serve as the sampled points, and since it is large we can apply union bound to show we will make no error when extending it to a community.

%\vspace{10pt}
{\sc Any-Size-Dense-Community-Find}
\begin{algorithmic}[1]
\STATE Let $T = \frac{100\log (kd/\gamma)}{\alpha_{min}\epsilon^2}$
\FOR{$\alpha = 1$ downto $\alpha_{min}$ step $-\epsilon/4$}\label{line:dense:anysize:loop}
\FORALL{sets of nodes $S$ of size $T$}
\STATE let $U$ be the set of all nodes that has more than $\alpha-\epsilon/4$ fraction of edges to $S$.
\STATE keep $U$ if it is a $(\alpha,\alpha-\epsilon/2)$ set.
\ENDFOR
\ENDFOR
\end{algorithmic}

For each community $C$ with density $\alpha_C$, there must be a value of $\alpha$ in the loop (line~\ref{line:dense:anysize:loop}) where $\alpha_C \ge \alpha+\epsilon/8$ and $\alpha_C-\epsilon<\alpha-\epsilon/2 - \epsilon/8$ (because the stepsize is $\epsilon/4$).
Assume this is the case, and let $S$ be a uniformly random set of size $T$ in $C$. For any node $v$, if $v\in C$, then the expected number of edges to the set $S$ is more than $\alpha$ fraction; if $v\not\in C$ the expected number of edges to the set $S$ is less than $\alpha-\epsilon/2$ fraction. The probability that the number of edges are $\epsilon/4$ fraction away from expectation is at most $\exp(-(\epsilon/4)^2 T\alpha_{min}) \leq (kd/\gamma)^{-2}$. We only need to apply union bound on the nodes of $C$ and their neighbors, so the size is much smaller than $(kd/\gamma)^2$. By union bound the probability that the algorithm successfully find $C$ is not 0. Since we are trying all possible sets $S$ the algorithm will always find all the communities.
\end{proof}

Although the algorithm is for dense subgraphs, if run it with $\alpha_{min}=1$, it will find all clique communities of any size.
%
%(We can also run the algorithm until all the edges are accounted for, or even choose nodes that are incident to edges not accounted for.)
%
%
%\begin{wrapfigure}{r}{0.45\textwidth}
%%  \vspace{-20pt}
%  \hrule\medskip
%  \begin{algorithmic}[1]
%    \STATE \textbf{let} $X_1 \gets X$
%
%    \FOR{$i = 1$ to $2$}
%
%    \STATE \textbf{let} $S_i \gets$ Greedy$(X_i)$
%    \label{step:greedy}
%
%    \STATE \textbf{let} $S'_i \gets$ \FMV$_\alpha(S_i)$
%    \label{step:fmv}
%
%    \STATE \textbf{let} $X_{i+1} \gets X_i \setminus S_i$.
%
%    \ENDFOR
%
%    \RETURN best of $S_1, S_1', S_2$. \label{step:ret}
%  \end{algorithmic}
%  \medskip\hrule
%  \caption{Submod-Max-Cardinality$(X,k,f)$}
%  \label{alg:card}
%  \vspace{-20pt}
%\end{wrapfigure}
%

\subsection{Polynomial Time Algorithm for Cliques of Different Sizes}

%We consider relaxing the assumption in Model A that the communities have
%sizes between $\delta k$ and $k$.
Now we try to improve the quasipolynomial time in Theorem~\ref{thm:anysizedense}
in the subcase when communities are cliques of different sizes.
The idea will be to reduce the amount of sampling and exhaustive enumeration.
% communities can overlap or even contain each other,
To prove this works we need to make assumptions beyond $1, 2, 3$.  %An immediate difficulty is that we cannot use subsampling directly: the sample will never touch the smaller communities. Even worse, s
The difficulty is that the solution can be highly
nonunique and degenerate if communities are allowed to be
too  ``similar.''  For example, suppose  node $w$ is not in a community $C$ but is contained in other communities
with large subsets of $C$. Should we now consider $w$ to be part of $C$, since it does have edges to all (or most) of $C$?
Our network model assumes such cases do not arise.
% and
 %that communities are ``different'' in a strong sense: each node $v$ in each community $C$ has $\beta$ fraction of its community edges going to nodes that
%do not lie in any other community $C' \neq C$ that contains $v$.

%To make sure these unreasonable cases do not happen, we make further assumptions:

%\medskip
\noindent{\em Assumption 4) Communities are fairly distinct.}
For each node $u$ in community $C$, at least a constant factor, say $\beta$, of $C$ does not lie in any other community containing $u$.
This is in accord with the intuitive view of how communities arise: the interconnection structure provides utility to its members
above and beyond what existed before~\cite{Feld81}.

%This also seems a fair assumption from a sociological viewpoint, since communities are believed to arise in networks because they provide utility to their members above and beyond what is obtainable in existing communities.

The next assumption is technical and perhaps was being assumed by the reader all along.
Surprisingly, we did not need it until now.

%\medskip
\noindent{\em Assumption 5) Completeness Assumption.} Any set that satisfies all the assumptions of a community in the model is a community.   (Also called ``Duck Assumption'': ``{\em If it looks like a duck, quacks like a duck, and walks like a duck, it's a duck.}'') This  ensures the adversary can't  satisfy Assumption 4
by just pretending that a certain set  is not a community even though it looks like one.

Finally we want to strengthen Assumption 3 so that smaller communities are distinguishable in principle from the noise introduced by the ambient edges:

%\medskip
\noindent{\em Assumption 3')} Every community a node $v$ belongs to has size at least $\gamma/d$ times the number of ambient edges incident on the node $v$.

\begin{theorem}
\label{thm:anysizeclique}
Given a graph that satisfies all assumptions in this section with parameters \\$(n, m, k, d, \beta, \epsilon, \gamma)$ where that $k \geq 3$, the {\sc Any-Size-Clique-Community-Find-Algorithm} will output all communities in $\mathcal{C}$ with probability at least $1-n^{-5}$ in time \\ $O(n\log n \cdot (kd/\gamma)^{\log(2/\epsilon)/\beta+1} \cdot 2^{\tilde{O}(\log^2 d)})$.%2^{O( \frac{Td \log(30T d/\epsilon\gamma)}{\epsilon \gamma})})$ where $T$ is $1/\beta\log (2/\epsilon)$.

%Let graph $G$ with $|V| =n$ be such that 1) each node is in at most $d$ communities (which are cliques),  2) member of each community $|C|$ has only this community in common with at least a $\beta |C|$ other members of the community, 3) for any individual, the size of the smallest community he's in is at least a $\gamma/(1-\gamma)$ fraction of the number of edges he has that are not from community membership , 4) and if an individual $v$ is not a member of some community $C$, then $v$ knows at most a $1-\epsilon$ fraction of the members of $C$ .  5) All communities are of size at least $m$ and all cliques of size at least $m$ are communities.

%The Any-Size-Clique-Community-Find-Algorithm will output each community of $G$ with probability at least $1-2^{-\Omega(n)}$, and will run in time $(nd\log (2/\epsilon)/\gamma\beta)^{O(\frac{d\log 2/\epsilon}{\epsilon\gamma\beta})}$.
\end{theorem}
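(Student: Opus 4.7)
The plan is to extend the sampling-plus-refinement approach of Theorem~\ref{thm:fastsamesizeclique} to clique-communities of widely differing sizes. Since uniformly subsampling $\Gamma(v)$ no longer captures small communities reliably, I would replace the sampling step by exhaustive enumeration of small ``seeds,'' using Assumption 4 (distinctness) to keep the seed size bounded by $T = \lceil \log(2/\epsilon)/\beta \rceil$.

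First, $\Theta(n\log n)$ starting nodes should be chosen so that, by Assumption 3' and a union bound, every community contains some chosen $v$ with probability at least $1 - n^{-5}$. For each starting node $v$, the algorithm would enumerate every subset $S \subseteq \Gamma(v) \cup \{v\}$ of size $T$, compute the seed closure $V_S := (\Gamma(v)\cup\{v\}) \cap \bigcap_{s \in S}\Gamma(s)$, and then pass $V_S$ through the degree-threshold / Gap-Assumption check from Theorem~\ref{thm:fastsamesizeclique}. The $2^{\tilde O(\log^2 d)}$ factor in the runtime accounts for this inner refinement; the $(kd/\gamma)^{T+1}$ factor reflects enumerating $v$ together with the $T$-element seed, since $|\Gamma(v)| \leq kd/\gamma$ by Assumption~3.

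Correctness reduces to the claim that for every clique-community $C$ and every $v \in C$ there is a seed $S \subseteq C$ of size $T$ for which $V_S$ is close enough to $C$ that the refinement recovers $C$ exactly. By the Duck Assumption (Assumption 5) the algorithm cannot output a spurious set, so it suffices to exhibit such an $S$. I would do so via a greedy construction: starting from $S_0 = \{v\}$, at step $i$ pick $s_i \in C$ that minimizes $|V_{S_i} \setminus C|$. Because $C$ is a clique, $V_{S_i} \cap C$ stays equal to $C$ throughout. The key technical claim is that a single greedy step shrinks $|V_{S_{i-1}} \setminus C|$ by a factor of at least $1 - \beta$: every spurious $w \in V_{S_{i-1}} \setminus C$ must be adjacent to all of $S_{i-1}$ through either (i) ambient edges, bounded by Assumption 3' to be negligible, or (ii) membership in some other community $C_w \neq C$. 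Applying Assumption 4 at any $u \in C \cap C_w$ gives $|C \setminus C_w| \geq \beta|C|$, so a uniformly random $s_i \in C$ lies outside $C_w$ with probability at least $\beta$, and combined with the Gap Assumption this excludes $w$ from $V_{S_i}$ with probability $\geq \beta$. An averaging argument across the spurious $w$'s then produces a single deterministic $s_i$ realizing the shrinkage. Iterating $T = \log(2/\epsilon)/\beta$ times shrinks the spurious part to at most $\epsilon|C|/4$, after which the Gap Assumption lets the Theorem~\ref{thm:fastsamesizeclique}-style refinement peel off exactly $C$.

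The main obstacle is establishing the $1 - \beta$ (rather than the weaker $1 - \epsilon$) contraction per greedy step. A naive Gap-only argument yields $T = O(\log(kd/\gamma)/\epsilon)$, matching only the quasi-polynomial bound of Theorem~\ref{thm:anysizedense} rather than the polynomial bound claimed here. Extracting $1/\beta$ requires combining Assumption 4 with the Gap Assumption while handling that a spurious $w$ may belong to several other communities simultaneously, and that different $w$'s are certified by different $C_w$'s; a careful potential-function or averaging argument over the $(w, C_w)$ pairs is needed to ensure the shrinkage from a single $s_i$ applies to most of the current spurious set at once.
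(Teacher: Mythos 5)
Your proposal diverges from the paper's argument in a way that leaves a real gap, and it also misses the structural device that actually makes the different-sizes case polynomial. The core unproven step is exactly the one you flag: the claimed $(1-\beta)$ contraction of $|V_{S_i}\setminus C|$ per greedy seed element does not follow from the assumptions. Assumption 4 constrains how much $C$ overlaps \emph{other communities containing a node of $C$}; it says nothing about how a spurious $w\notin C$ attaches to $C$. Even if a random $s_i\in C$ lies outside $w$'s certifying community $C_w$ with probability $\ge\beta$, nothing prevents $w$ from being adjacent to $s_i$ through a third community or an ambient edge, so $w$ need not be excluded from $V_{S_i}$. The only non-adjacency rate the assumptions give for outside nodes is the Gap Assumption's $\epsilon$, which, as you note, yields only $T=O(\log(kd/\gamma)/\epsilon)$ and hence the quasi-polynomial bound of Theorem~\ref{thm:anysizedense}. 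There is no averaging or potential-function fix here; the $1/\beta$ simply cannot be extracted from filtering by intersections of neighborhoods.

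The paper avoids this entirely by using the $T$-set for \emph{coverage} rather than filtering, inside a multi-scale peeling loop. The algorithm sweeps scales $l=k,k/2,\dots,m$, finds cliques of size in $[l/2,l]$ at each scale, and before searching a scale deletes all edges belonging to already-found (larger) communities. After this deletion a single node $v\in C$ may have lost most of its intra-$C$ edges, but Assumption 4 guarantees each $u\in C$ retains at least $\beta|C|$ of them, so a $T$-subset $S\subseteq C$ with $T=\log(2/\epsilon)/\beta$ has $\Gamma^-(S)$ (a \emph{union} of pruned neighborhoods, not an intersection) covering a $(1-\epsilon/2)$-fraction of $C$ --- this is where $\beta$ legitimately enters. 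The rejection of non-members is then done exactly as in Theorem~\ref{thm:fastsamesizeclique}: subsample $\Gamma^-(S)$ at probability calibrated to the current scale $l$, enumerate cliques $U$ of size $\le 2pl$ (this enumeration is what produces the $2^{\tilde O(\log^2 d)}$ factor, which your scheme has no source for), and use the Gap Assumption plus Assumption 3' (which bounds $|\Gamma^-(S)|\le Tld/\gamma$) to show $V'$ has few spurious nodes. Without the scale-by-scale calibration and edge deletion, a single filter of constant size $T$ applied to a neighborhood of size up to $kd/\gamma$ cannot isolate a community of size $m\ll k$, so your single-pass variant would fail on small communities even if the contraction claim held.
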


\begin{proof}
The main algorithmic difficulty will be that $\Gamma(v)$ for any node $v$ may contain cliques of many different sizes. A subsample of $\Gamma(v)$ would be likely to hit
large cliques quite often, but not the smaller cliques. To solve this problem we try to find large cliques first. After cliques of size greater than $k$ are found, we can henceforth ignore their edges, and proceed to find smaller cliques.

Another problem is that after removing all edges in the large cliques, the remaining neighborhood of $v$
(called $\Gamma^-(v)$ in the algorithm)
 may not contain all nodes of a remaining clique. To solve the problem the algorithm uses a set $S$ of size $T$. We should think of $S$ as a random set in the community $C$, then by Assumption 4 and concentration bounds we know a large fraction of nodes in $C$ are in $\Gamma^-(S)$.

%We call this ``Model A'".

%{\sc Rong: Here we override the meaning of $\delta$, maybe we should use some other letter?}

%We consider the following algorithm: %when the smallest community can have size $m$:

{\sc Any-Size-Clique-Community-Find-Algorithm}

%\vspace{10pt}
%Let $k = n$. Repeat the following $\log n$ times:
%
%\begin{enumerate}
%\item Randomly choose a node $v$
%
%\item Let $G(\Gamma(v))$ be the induced subgraph of $v$ and her neighbors.
%
%\item Let $T = 2 \log (1/\delta)$ and enumerate over all subsets of nodes of size T:  $S \subseteq \Gamma(v)$ such that $|S|= T$.  (We are hoping that $S \subseteq C$ where $C$ is a community of size between $k/2$ and $k$).
%\item Let $\Gamma^-(S)$ be the set of nodes that are connected with some node in $S$ using an edge that does not belong to any of the communities already found by the algorithm. Consider $G(\Gamma^-(S))$, the induced graph of the $T$ nodes in $S$ and their ``out of community'' neighbors. We denote the size of this new induced graph $L$.
%\item Subsample this subgraph to expected size $\frac{2 \log(k(d-1)/\gamma)}{\delta \epsilon}$ by including each node with probability $p = \frac{2 T \log(k(d-1)/\gamma)}{\delta \epsilon k}$  Fail this round if this graph has size more than three times the expectation.
%
%\item Find all maximal cliques in subsampled graph.
%
%\item Extend all cliques greedily and keep those that are of size at least $\delta k$.
%
%\end{enumerate}
%Let $k = k/2$

% Repeat the following $\log n$ times:
\begin{algorithmic}[1]
\STATE Let $l = k$.
\WHILE{$l \ge m$} \label{line:clique:anysize:whileloop}
\STATE Randomly choose $100n\log n/l$ starting nodes , repeat the following for each starting node $v$ \label{line:clique:anysize:pickvertex}
\STATE Let $G(\Gamma(v))$ be the induced subgraph of $v$ and her neighbors.
\STATE Let $T = \log (2/\epsilon)/\beta$ and enumerate over all subsets of nodes of size T:  $S \subseteq \Gamma(v)$ such that $|S|= T$.  (We are hoping that $S \subseteq C$ where $C$ is a community of size between $l/2$ and $l$).
\FORALL{ choices of $S$}
\STATE Let $\Gamma^-(S)$ be the set of nodes that are connected with some node in $S$ using an edge that does not belong to any of the communities already found by the algorithm. Consider $G(\Gamma^-(S))$, the induced graph of the $T$ nodes in $S$ and their ``out of community'' neighbors. We denote the size of this new induced graph $L$.
\STATE Subsample this subgraph by including each node with probability $p = \frac{4 \log(30T d/\epsilon\gamma)}{\epsilon l}$  Fail this round if this graph has size more than three times the expectation.
\FORALL{Cliques $U$ in the subsampled graph of size at most $2pl$}
\STATE Let $V'$ be the set of nodes in $\Gamma(v)$ which are connected to all nodes in $U$, and let $G(V')$ be the induced subgraph on $V'$.
\STATE Let $U'$ be the set of nodes in $G(V')$ whose degree in this subgraph is at least $(1-\epsilon/4)|V'|$. Greedily extend $U'$ to a maximal clique $U''$. Output $U''$ if it is a clique and for all $v\not\in U''$, $v$ is connected to at most $1-\epsilon$ fraction of $U''$. \label{line:clique:anysize:greedyextension}
\ENDFOR
\ENDFOR
\STATE Let $l = l/2$
\ENDWHILE
\end{algorithmic}

%The algorithm tries to find large cliques first.
We show that if the algorithm correctly finds all cliques of size larger than $l$, then an iteration of the WHILE loop at line~\ref{line:clique:anysize:whileloop} will correctly find all communities with size between $l/2$ and $l$ with probability $1-n^{-10}$. The theorem then follows from a union bound.

Fix a community $C$ of size between $l/2$ and $l$, and assume a node $v\in C$ has already been chosen at step~\ref{line:clique:anysize:pickvertex}. Let $S$ be a random subset of $\Gamma(v)$ of size $T$. By Assumption 4 we know even after ignoring all edges of larger size, the number of remaining edges from any $u\in C$ to $C$ is at least a $\beta$ fraction, thus $|\Gamma^-(u)\cap C|\ge \beta |C|$. A random set of size $T$ intersects any set of size $\beta|C|$ with probability %The probability that none of these $\beta|C|$ points are inside $S$ is only $(1-\beta)^T =
$1-\epsilon/2$, therefore in expectation $\Gamma^-(S)$ contains a $1-\epsilon/2$ fraction of $C$. Since we are enumerating over all sets $S$ we can now assume $S$ is such that $\Gamma^-(S)$ contains at least $1-\epsilon/2$ fraction of $C$.

Now similar to Theorem~\ref{thm:fastsamesizeclique} it is easy to check that the following statements hold with high probability (i) the subsampling gives a sample of size at most thrice the expectation. (ii) For any node $u\not\in C$ there is a set of size at least $\epsilon|C|/2$ in $\Gamma^-(S)\cap C$ that is not connected to $u$. Suppose we take $U$ to be the intersection of community $C\cap \Gamma^-(S)$ and subsampled nodes, then we have (iii) $|V'|\leq |\Gamma^-(S)\cap C| + \epsilon|C|/15$. This is because by Assumption 3' the size of $\Gamma^-(S)$ is bounded by $Tld/\gamma$ and each of the nodes outside $C$ has only $\exp(-p\epsilon|C|/2)$ probability of being in $V'$.

The last event implies $V'$ is really close to $\Gamma^-(S)\cap C$. Now in graph $G(V')$ each $u\in V'\backslash|C|$ has degree $(1-\epsilon+\epsilon/15)|C|$; each $u\in C$ has degree at least $(1-\epsilon/2)C$. This gap enables the algorithm to use a threshold of $(1-\epsilon/4)|V'|$ to distinguish whether $u$ is in the community $C$ or not. The set $U'$ will be equal to $\Gamma^-(S)\cap C$.%, which is at least $(1-\epsilon/2)$ fraction of $C$.

%Now we do the subsampling and by Chernoff bound the probability that the algorithm fails at this step is very low. After subsampling let $U$ be the intersection of subsampled nodes and the community $C$. For any node $u\not\in C$ not in the community $C$, it can only be connected to $(1-\epsilon)$ fraction of nodes in $C$, thus there is a set of $\epsilon/2$ nodes in $\Gamma^-(S)\cap C$ that does not have an edge to $u$. The probability that every node in $U$ is adjacent to $u$ is therefore $(1-p)^{\epsilon|C|/2} \le \epsilon/Td30$. However, the number of nodes in $\Gamma^-(S)$ is at most $Tkd/\gamma$, thus the number of nodes outside $C$ in $V'$ is only $Tkd/\gamma \cdot \gamma\epsilon/Td30\le \epsilon k/30 \le \epsilon |C|/15$. Clearly if $u\in V'\backslash|C|$ it can only be connected to $(1-\epsilon+\epsilon/15)|C|$ nodes in $V'$, and if $u\in C$ it is connected to at least $(1-\epsilon/2)|C|$ nodes in $V'$, the size of $V'$ is between $(1-\epsilon/2)|C|$ and $(1-\epsilon/2+\epsilon/15)|C|$, thus setting the threshold at $(1-\epsilon/4)|V'|$ suffices to distinguish whether $u$ is in the community $C$ or not. The set $U'$ will be equal to $\Gamma^-(S)\cap C$, which is at least $(1-\epsilon/2)$ fraction of $C$.

Finally by the Gap Assumption we know during the greedy extension of step~\ref{line:clique:anysize:greedyextension}, we can only include nodes in $C$ and in fact will include all nodes in $C$. Therefore $U'' = C$ and the community $C$ is found with high probability.

The running time of the algorithm is dominated by the round when $l =k$. At that round on the size of the subsampled set is at most $O(p \cdot Tkd/\gamma) = O( \frac{Td \log(30T d/\epsilon\gamma)}{\epsilon \gamma})$, we want to find a set of size $2pk$. Thus the running time is \\$O(n\log n  (kd/\gamma)^{T+1}O( \frac{Td \log(30T d/\epsilon\gamma)}{\epsilon \gamma})^{2pk} = O(n\log n (kd/\gamma)^{\log(2/\epsilon)/\beta+1} 2^{\tilde{O}(\log^2 d)})$.

\end{proof}

We leave it as an open problem to identify reasonable set of assumptions that allow polynomial time when communities are dense subgraphs.
The problem is that the ``duck assumption" is not well defined: we know what a clique looks like, but it is hard to tell whether a subgraph "looks like" a community generated according to Assumption 1 when there are overlapping communities and ambient edges. We could try to make a stronger "duck assumption" by assuming every large $(\alpha, \alpha -\epsilon)$- set is a community, and then a similar algorithm will be able to find all dense communities in polynomial time. But this is not as reasonable as our other assumptions: consider two
$(\alpha, \alpha -\epsilon)$-sets $C_1$ and $C_2$ of size $2k/3$ and their intersection has size $k/3$, then it's quite likely that their union $C_1 \cup C_2$ is a  $(\alpha, \alpha -\epsilon)$ set but we don't consider this set as a community.

\iffalse
%we can also hope for finding all communities in polynomial time. However,
The problem is that the ``duck assumption'' is not well defined: we know what a clique looks like, but it is hard to tell whether a subgraph ``looks like'' a community generated as Assumption 1 when there are overlapping communities and ambient edges. We could try to make a stronger ``duck assumption'' by assuming every large $(\alpha,\alpha-\epsilon)$ set is a community, and then a similar algorithm will be able to find all dense communities in polynomial time. But this is not reasonable as our other assumptions: consider two $(\alpha, \alpha-\epsilon)$ sets $C_1$ and $C_2$ of size $2k/3$ and their intersection has size $k/3$, then it is quite likely that their union $C_1\cup C_2$ is a $(\alpha/2, \alpha/2-\epsilon)$ set but we don't consider this set as a community.

\fi

\section{Relaxing the Assumptions}
\label{sec:real-life}
%\label{sec:real-world}

\subsection{Relaxing Assumption 1}

\label{subsec:relax1}
Assumption 1 states that each community's edges are generated according to a expected degree model. In this section we note that the algorithms and proofs of Theorems~\ref{thm:robustsamesizedense} and Theorem~\ref{thm:anysizedense} actually apply to a more general setting.

We first note that we can substantially relax the Dense-Similar-Size Model by replacing Assumption 1 with the following two requirements:
%\begin{description}

\noindent \textbf{Concentration}: the number of edges from any node $u$ to any community $C$ is concentrated around the expectation, that is, $\Pr[|\Gamma(u)\cap C| \not\in[(1\pm \epsilon) \E[|\Gamma(u)\cap C|]]] \le \exp(-\epsilon^2\E[|\Gamma(u)\cap C|])$, and the degree of each node is concentrated similarly.

\noindent \textbf{$(\alpha, \epsilon)$-Regularity}: for all $u,v\in C$,
   $\Pr[|\Gamma(u)\cap \Gamma(v)\cap C| \leq [(1 - \epsilon) \E[\alpha |\Gamma(v)\cap C|]]]$ \\ \indent  \hspace{110mm} $\le \exp(-\epsilon^2\E[\alpha |\Gamma(v)\cap C|])$
%\end{description}

These properties do not require full dependence, but only limited independence, which could be satisfied, for example, by the configuration model ~\cite{Jackson08} which generates a multigraph with (nearly) any particular preassigned degree distribution.  This definition could also accommodate additional structure that introduces dependencies among the edges as long as there is still sufficient independence to satisfy Concentration and Regularity.  Consider, for instance, the disjoint union of two equal-sized cliques with a random bipartite graph of density $\beta$ between them.  This is $\alpha$-regular for any $\beta > \frac{\alpha}{4 - 2 \alpha}$.  Thus communities can be much more clumpy than in the expected degree model.

\begin{remark}
Theorem~\ref{thm:robustsamesizedense} still holds with the same proof after replacing Assumption 1 with Concentration and $(\alpha, \epsilon)$-regularity.
\end{remark}

%%Could put this in conclusion???
%A key assumption we made was that the communities themselves were $\alpha$-regular which implies they are not too clumpy--any two people in a community have at least some expected number of mutual neighbors within the community.  Intuitively, this assumption means that the ego-centric network of a node $i$ in a community $C$ gives a fairly undistorted view of $C$.  We noted that this held in several graph generation models such as $G(n, \alpha)$, the configuration model, and the expected degree model.  However, these models are rather simplistic.  We feel the largest assumption that this paper makes is what the communities that we are trying to locate look like and we discuss this further below.

%We secondly note that we can relax Assumption 1 so that theorem~\ref{thm:anysizedense} %does not require the expected degree model but only requires concentration.
\begin{remark}
Theorem~\ref{thm:anysizedense} still holds with the same proof after replacing Assumption 1 with Concentration.
\end{remark}

%%%%%%%%%%%%%%%%%%

\subsection{Relaxing Assumption 2--the Gap Assumption}
\label{subsec:relax2}

Though
plausible, the Gap Assumption may not exactly hold in a real-life graph since there may be nodes that fall in the ``gap.'' Our algorithm needs to still return sensible answers.
 Now we argue that our algorithms in Theorem~\ref{thm:fastsamesizeclique}, and Theorem~\ref{thm:samesizedense} produce sensible answers even when this happens.
We use the Clique-Community-Find-Algorithm (Theorem~\ref{thm:fastsamesizeclique}) to illustrate.
Of course in this setting we cannot hope to return the exact communities. Instead the algorithm will return  some $C'$ that contains more than a $1-\epsilon$ fraction of $C$ and has density more than $1-\epsilon$.

%We restate in this setting.

\begin{theorem}
\label{thm:relaxgapA}
If $G$ is a graph that satisfies Assumptions 1 and 3, and each community is a clique that has size between $\delta k$ and $k$ where $\delta > 0$ is some constant and $k$ is arbitrary but known to the algorithm, then the Clique-Community-Find-Algorithm can be adapted so that for any community $C$, the algorithm finds a set $C'$ such that $|C'\cap C|\ge (1-\epsilon)|C|$ and for each $v\in C'$, the number of edges to $C'$ is at least $(1-\epsilon)|C'|$.
\end{theorem}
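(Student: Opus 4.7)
The plan is to adapt the Clique-Community-Find-Algorithm by relaxing its output criterion. Rather than insisting $U'$ be an exact clique satisfying the outside-gap condition, we accept any set $C'$ of size at least $\delta k$ that is a $(1-\epsilon)$-near-clique, i.e., each $v\in C'$ has at least $(1-\epsilon)|C'|$ neighbors inside $C'$.

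The probabilistic core of Theorem~\ref{thm:fastsamesizeclique} carries over unchanged. With probability at least $2/3$ a starting vertex $v\in C$ is chosen, the subsample $S\subseteq \Gamma(v)$ has the target size, and the algorithm enumerates the clique $U=S\cap C$ inside its for-loop. Because $C$ is a clique and $U\subseteq C$, every vertex of $C$ is adjacent to every vertex of $U$, placing all of $C$ in the common-neighbor set $V'$. This containment $C\subseteq V'$ is the one structural property that survives the loss of the Gap Assumption, and it will be the anchor for everything that follows.

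The substantive change is in how $C'$ is extracted from $V'$. Without the Gap Assumption, $V'$ may contain many ``outlier'' vertices in $W=V'\setminus C$, so thresholding at $(1-\epsilon/2)|V'|$ (as in the original step) can both omit vertices of $C$ and produce a non-clique. Instead, since $|C|\in[\delta k,k]$ with $k$ known, I would enumerate $O(\log_{1+\epsilon}(1/\delta))$ guessed sizes $s$, and for each form the candidate $U'_s=\{u\in V':\deg_{V'}(u)\geq (1-\epsilon/2)s\}$. When $s$ matches $|C|$ up to a $(1+\epsilon/2)$ factor, every vertex of $C$ passes the threshold because its degree in $V'$ is at least $|C|-1$, so $C\subseteq U'_s$. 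A final cleanup pass iteratively deletes any vertex whose degree inside the current set falls below $(1-\epsilon)$ times the current set's size, and we output the result as $C'$ if its size is still at least $\delta k$ and the near-clique condition holds.

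The main obstacle is controlling the cleanup so that it removes at most $\epsilon|C|$ vertices of $C$. The invariant I would try to maintain is: whenever $C$ is contained in the current set and that set has size at most $(1+2\epsilon)|C|$, no vertex of $C$ can be pruned, because each $v\in C$ retains at least $|C|-1$ neighbors from the remainder of $C$. Thus the task reduces to showing the cleanup shrinks $U'_s$ down to size roughly $|C|$ before endangering any $C$ vertex. The hard step is arguing this; it will rely on Assumption 3 to bound $|V'|\leq dk/\gamma$ and on a counting argument showing that vertices $w\in W$ admitted to $U'_s$ by the threshold must each have $\Omega(|C|)$ neighbors inside $V'$, so only few such vertices can persist through the peeling before an outlier of low in-set degree is exposed. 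Formalizing this worst-case counting argument — in the absence of the Gap Assumption and without invoking any auxiliary assumption beyond 1 and 3 — is the most delicate part of the proof.
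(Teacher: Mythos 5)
There is a genuine gap, and you have located it yourself: the entire difficulty of this theorem is bounding how many vertices of $C$ are lost during the iterative peeling, and your proposal leaves exactly that step open. Your fallback plan does not work as sketched. You hope to show the cleanup ``shrinks $U'_s$ down to size roughly $|C|$ before endangering any $C$ vertex,'' but without the Gap Assumption the set $H$ of outsiders adjacent to a $(1-\epsilon')$-fraction of $C$ can be as large as $|V'|\le kd/\gamma$ itself (the ambient edges are adversarial), these vertices can be densely interconnected, and they may legitimately survive every peeling round -- the theorem's conclusion does not require $C'$ to be close to $C$ from above, only that $C'$ contains most of $C$ and is a near-clique. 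So ``only few such vertices can persist'' is false in general, the current set need never shrink to size $\approx|C|$, and your invariant (safe once the set has size at most $(1+2\epsilon)|C|$) never becomes applicable during the dangerous early rounds, when the set still has size up to $kd/\gamma$ and vertices of $C$ genuinely can fall below the density threshold.

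The paper closes this gap with three ingredients your proposal is missing. First, it reruns the base algorithm with a much smaller gap parameter $\epsilon'=\epsilon^2/(6\log(d/\delta\gamma))$, so that $V'$ decomposes as $C\cup H\cup W$ with $|W|\le\epsilon'|C|$ and every $h\in H$ adjacent to at least a $(1-\epsilon')$-fraction of $C$. Second, it bounds the per-round loss by an averaging argument over $C$ rather than a worst-case one: as long as at most half of $C$ has been removed, the total number of non-edges between the surviving part of $C$ and $H$ is at most $2\epsilon'|C||H|$, so the \emph{average} density of a $C$-vertex in the current set is at least $1-3\epsilon'$, and by Markov at most a $6\epsilon'/\epsilon$ fraction of $C$ falls below the $1-\epsilon/2$ pruning threshold in any one round. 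Third, it bounds the number of rounds by $\log(d/\delta\gamma)$: any round that removes fewer than half the current vertices leaves every survivor with density at least $1-\epsilon$ and hence terminates the process, so every non-terminal round at least halves the set, and $|V'|/|C|\le d/(\gamma\delta)$. Multiplying the per-round loss by the round count and using the choice of $\epsilon'$ gives total loss at most $\epsilon|C|$. Without shrinking $\epsilon'$ and without the averaging-over-$C$ step, your counting argument cannot be completed from Assumptions 1 and 3 alone.
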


\begin{proof}
The idea is to run the Clique-Community-Find-Algorithm as before.  Once we get $V'$ corresponding to setting $U = S \cap C$ (recall that under the assumptions of Theorem~\ref{thm:fastsamesizeclique} that this $V'$ contains $C$ and has only $\frac{\epsilon}{4}|C|$ nodes outside $C$), we know with high probability $V'$ consists of 3 parts: the community $C$ itself, some set of nodes that have more than $1-\epsilon$ edges to $C$, and some set of nodes that have no more than $1-\epsilon$ edges to $C$.  In these 3 parts, the first part is what we want, the third part is very small by the proof of Theorem~\ref{thm:fastsamesizeclique}, and so only the second part worries us. Among these three parts, the nodes in $C$ should tend to have the largest degrees, so we will use the degree of these nodes to identify them.
 For any node $v$ in $V'$, we call $|\Gamma(v)\cap V'| /|V'|$ the {\em density} of $v$.
The idea will be to  run Clique-Community-Find-Algorithm with some parameter $\epsilon'$ that is somewhat smaller than $\epsilon$ to get $V'$. Then repeatedly remove nodes from $V'$ that have density less than $1-\epsilon/2$  until the density of all remaining nodes is more than $1-\epsilon$. We will show by a simple calculation that not many nodes in $C$ will be removed. The details are as follows:

\begin{enumerate}
\item{Run Clique-Community-Find-Algorithm with $\epsilon' = \epsilon^2 / 6\log (d/\delta\gamma)$. }
\item{Once the algorithm has produced the set $V'$, repeatedly remove all nodes of density less than $(1-\epsilon/2)$ until for every $v\in V'$, the density is at least $(1-\epsilon)$.}
\end{enumerate}

Let $V' = C\cup H \cup W$, where $C$ is the community, $H$ are the nodes that are connected to more than a $1-\epsilon'$ fraction of $C$ and $W$ are the nodes that are connected to at most a $1-\epsilon'$ fraction of $C$. By the proof of Theorem~\ref{thm:fastsamesizeclique} $|W|\le \epsilon' |C|$, thus it is very small and can be ignored in the computation below.

The size of $V'$ is at most $kd/\gamma$, because each node in it is  adjacent to the node $v$ in the algorithm. We shall show that (i) each iteration removes at most an $\epsilon/\log (d/\delta\gamma)$ fraction of nodes in $C$, and (ii) if any iteration removes less than half of the nodes, each node in the remaining graph will be connected to at least a $1-\epsilon$ fraction of nodes. Claim (ii) implies that we will have at most $\log (d/\delta\gamma)$ iterations and then Claim (i) implies that at most an $\epsilon$ fraction of nodes in $C$ are removed.

For (i), notice that as long as less than half of $C$ is removed, the fraction of edges from any node in $H$ to the remaining part of $C$ is at least $1-2\epsilon'$. Thus the average density of nodes in $C$ is at least $1-3\epsilon'$. By Markov's inequality the fraction of nodes that have density less than $1-\epsilon/2$ is at most $3\epsilon'/(\epsilon/2) = \epsilon/\log (d/\delta\gamma)$.

For (ii), notice that all remaining nodes had density $1-\epsilon/2$, if less than half of the nodes are removed, their density should still be larger than $1-\epsilon$.

Notice that since the $\epsilon'$ used in {\sc Clique-Community-Find-Algorithm} is now $\epsilon^2/6 \log(d/\delta \gamma)$, the running time of the algorithm will be $O(n k/\delta\gamma 2^{\tilde{O}(\log^3 d)})$.
\end{proof}

Similar ideas can be used to relax the gap assumption in Theorem~\ref{thm:samesizedense}. The main difficulty of applying the same argument is that the edges not from community membership can be adversarially chosen. However in real life graphs this is not likely to happen: if two people do not share any community the probability that they know each other should be lower. If for all edges $(u,v)$ we have the probability $e_{u,v} \le \alpha$, then for any community $C$ the following algorithm can always find some set $C'$ with density at least $\alpha-\epsilon$ that contains at least a $1-\epsilon$ fraction of $C$:

\begin{enumerate}
\item{Run Robust-Dense-Community-Find algorithm with $\epsilon' = \epsilon^2 / 10\log (d/\delta\gamma)$. }
\item{Once the algorithm has produced the set $V'$, repeatedly remove all nodes of density less than $\alpha-\epsilon/2$ until the density of every node is at least $(\alpha-\epsilon)$.}
%\item{Find $t \in \{0,1,...,1/\epsilon'\}$ such that the number of nodes with density in range $[\alpha-\epsilon/2+t\epsilon'^2, \alpha-\epsilon/2+(t+1)\epsilon'^2]$ is smaller than $\epsilon'$ fraction. Remove all nodes of density less than $\alpha-\epsilon/2+(t+1/2)\epsilon'^2.$}
\end{enumerate}

The proof is very similar to Theorem~\ref{thm:relaxgapA}. First we focus only on the expected degree of nodes. Since $e_{u,v}\le \alpha$ we can normalize these probabilities by multiplying by $1/\alpha$. Now $e_{u,v} = 1$ for nodes in a community, and $e_{u,v}\le 1$ for all pairs $u$, $v$. Thus same argument as Theorem~\ref{thm:relaxgapA} shows that the algorithm works if we are given the true values of $e_{u,v}$. Then we argue that the algorithm should also work even if we are just given a random graph $G$, because the algorithm only uses the degree of nodes in various sets and they all concentrate around their expectation. There are some techinalities here when many nodes have expected degree very close to the threshold we are setting, which can be resolved if we choose a random threshold between $\alpha-\epsilon/2$ and $\alpha-0.4\epsilon$.

\section{Sparser Communities}
\label{sec:sparse}
In previous sections we have been talking about communities as dense graphs. This is natural when the community considered is small and people inside are closely related, such as people in the same year and department in a university. However this may not be true for larger communities: if we consider all students in a large university, or even all computer scientists, then it is unlikely that every person knows a constant fraction of other people in the community. In this section we show how our ideas can be applied to communities that are not so dense.

Consider a simple set of assumptions (``Sparse'') where the affinities for a community $p_u = \Omega(k^{-1/4})$. That is two people in the same community of size $k$ know each other only with probability $\Omega(1/\sqrt{k})$.
We assume the network satisfies Assumption 1 from Section~\ref{subsec:assumptions} as well as the following:

%\begin{enumerate}
%\item{$G$ is a random graph such that the edge $(u,v)$ present with some probability $e_{u,v}$ independently.}
\noindent{\bf (1.)} Each community has size $k$, for any two nodes $u$, $v$ in the same community the probability $e_{u,v} = B/\sqrt{k}$ where $B$ is a constant larger than 10.  %}
%\item{Each individual is in at most $d$ communities}
%\item{If two nodes $u$, $v$ share some community, then $e_{u,v} = B/\sqrt{k}$ where $B$ is a constant larger than 10.}
\noindent{\bf (2.)} There are no ambient edges. %}% If two nodes $u$, $v$ do not share any community, then $e_{u,v} = 0$.}
\noindent{\bf (3)} The intersection of any two communities $C\cap C'$ has size at most $k/20d^2$. %\label{modelC:smallintersection}}
%\end{enumerate}

\iffalse
\begin{model}{Sparse}
A graph $G$ with $n$ nodes and a set of communities $\mathcal{C}$ with maximum overlap $d$ is consistent with the Sparse Model with parameters $(n, k, d, B)$ if it satisfies Assumption 1 from Section~\ref{subsec:assumptions} as well as the following:

%A graph $G$ with $n$ nodes and a set of communities $\mathcal{C}$ are generated by Model C with parameters $(n, k, d, B)$ if they satisfy the following properties:

\begin{enumerate}
%\item{$G$ is a random graph such that the edge $(u,v)$ present with some probability $e_{u,v}$ independently.}
\item{Each community has size $k$, for any two nodes $u$, $v$ in the same community the probability $e_{u,v} = B/\sqrt{k}$ where $B$ is a constant larger than 10.
%\item{Each individual is in at most $d$ communities}
%\item{If two nodes $u$, $v$ share some community, then $e_{u,v} = B/\sqrt{k}$ where $B$ is a constant larger than 10.}
\item{There are no ambient edges.%}% If two nodes $u$, $v$ do not share any community, then $e_{u,v} = 0$.}
\item{The intersection of any two communities $C\cap C'$ has size at most $k/20d^2$. \label{modelC:smallintersection}}
\end{enumerate}
\end{model}
\fi
Notice that we do not need to require the gap assumption nor the duck assumption here because they are both implied by property~3
%\ref{modelC:smallintersection}
and the fact that every edge is in some community.

Instead of giving an algorithm to find communities under the ``Sparse" assumptions, we show that it can be transformed to a graph $G'$ such that $G'$, $\mathcal{C}$ fulfill the Dense-Similar-Size Assumptions. Then we can directly apply the {\sc Robust-Dense-Community-Find} algorithm of Theorem~\ref{thm:robustsamesizedense} to find the communities.

\begin{theorem}
\label{thm:sparse}
Let a graph $G$ and a set of communities $\mathcal{C}$ be consistent with the Sparse Model. Construct a graph $G'$ on the same set of nodes, where $u,v$ has an edge in $G'$ if and only if they have at least $B^2/2$ length-2 paths in $G$. Then the pair $(G',\mathcal{C})$ are consistent with Dense-Similar-Size Assumptions with parameters $(n, k, d, \alpha, \delta, \epsilon, \gamma)=(n, k, d, 0.9, 1, 0.6, 1/3d)$.
\end{theorem}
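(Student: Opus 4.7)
The plan is to verify each clause of the Dense-Similar-Size model for the pair $(G',\mathcal{C})$ with $\alpha=0.9$, $\delta=1$, $\epsilon=0.6$, $\gamma=1/(3d)$. The size condition is immediate since each $C\in\mathcal{C}$ already has exactly $k$ nodes, and the multi-community edge condition will follow automatically from the single-community analysis. The three substantive checks are: (i) the affinity lower bound $\Pr[uv\in E(G')]\geq p_up_v=0.9$ for $u,v\in C$; (ii) the Gap Assumption $|N_{G'}(w)\cap C|\leq 0.3k$ for $w\notin C$; and (iii) that community-sharing edges account for at least a $1/(3d)$ fraction of each $u$'s $G'$-degree.

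For (i), I would fix $u,v\in C$ and count common $G$-neighbors of $u,v$ lying in $C$. Each candidate $w\in C\setminus\{u,v\}$ contributes independently with probability $(B/\sqrt{k})^2=B^2/k$, so the expected count is $(k-2)B^2/k\sim B^2$. A Chernoff lower tail bound shows the count falls below $B^2/2$ with probability at most $e^{-B^2/8}$, which is below $1-\alpha=0.1$ whenever $B>10$. Hence $\Pr[uv\in E(G')]\geq 0.9=p_up_v$ upon setting $p_u=\sqrt{0.9}$ for every node in every community. Pairs sharing multiple communities only have strictly more candidate length-2 paths available, so the max-affinity condition for multi-community edges is automatic.

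For (ii), I partition $C$ into $C_s$, the nodes sharing some community with $w$, and $C_n=C\setminus C_s$. Since $w$ lies in at most $d$ communities and each meets $C$ in at most $k/(20d^2)$ nodes, $|C_s|\leq k/(20d)$. For any $v\in C_n$, a common $G$-neighbor $w'$ of $w,v$ must lie in an intersection $C_i^w\cap C_j^v$ with $C_i^w\neq C_j^v$; this gives at most $d^2\cdot k/(20d^2)=k/20$ candidates and expected common-neighbor count at most $B^2/20$. Multiplicative Chernoff yields $\Pr[wv\in E(G')]\leq (e/10)^{B^2/2}$, a negligible quantity for $B>10$. Summing: $\E[|N_{G'}(w)\cap C|]\leq k/(20d)+k\cdot(e/10)^{B^2/2}\leq 0.3k=(\alpha-\epsilon)k$, and standard concentration turns this into a w.h.p.\ deterministic bound.

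For (iii), note that any $G'$-edge at $u$ consumes at least $B^2/2$ length-2 $G$-paths from $u$, while the total length-2 path count is at most $\deg_G(u)\cdot\max_w\deg_G(w)$, which by concentration of $G$-degrees is at most $d_uB\sqrt{k}\cdot dB\sqrt{k}=d_udB^2k$, where $d_u\leq d$ is the number of communities containing $u$. Thus $\deg_{G'}(u)\leq 2d_udk$. On the lower side, the union of $u$'s $d_u$ communities has at least $d_uk-\binom{d_u}{2}\cdot k/(20d^2)\geq d_uk/2$ members besides $u$, each adjacent to $u$ in $G'$ with probability at least $0.9$ by step (i), giving expected community-sharing $G'$-degree at least $0.4d_uk$. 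The ratio is at least $0.4/d>1/(3d)$, as required. The main obstacle I expect is that edges of $G'$ are not mutually independent -- they are correlated through shared $G$-edges used in common length-2 paths -- so strictly speaking we are verifying the Concentration/Regularity relaxation of Assumption~1 from Section~\ref{subsec:relax1} rather than the independent-edge version; this is exactly what Theorem~\ref{thm:robustsamesizedense} needs, and the relevant concentration statements for $G'$-quantities reduce to Chernoff bounds on the underlying independent $G$-edges.
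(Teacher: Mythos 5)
Your overall route is the same as the paper's: reduce every clause of the Dense-Similar-Size model to counting length-$2$ paths in $G$ and apply Chernoff bounds to the underlying independent $G$-edges. Your verification of the affinity bound (i) is the paper's argument verbatim, and your Gap argument (ii) is a slightly different and arguably cleaner decomposition: you bound, for each $v\in C$ not sharing a community with $w$, the probability that $v$ becomes a $G'$-neighbor of $w$ (candidate common neighbors confined to pairwise intersections, hence $\le k/20$ of them, mean $B^2/20$ against threshold $B^2/2$), whereas the paper bounds the total expected number of length-$2$ paths from $w$ into $C$ and converts to a degree bound via the factor $2/B^2$. Both give well under $0.3k$.

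There are, however, two places where your write-up does not close. First, in step (iii) the arithmetic fails as written: from $\deg_{G'}(u)\le 2d_udk$ and a community-sharing degree of $0.4d_uk$ you get a ratio of $0.2/d$, not $0.4/d$, and $0.2/d<1/(3d)$. The culprit is the lossy lower bound $d_uk-\binom{d_u}{2}k/(20d^2)\ge d_uk/2$; since $\binom{d_u}{2}k/(20d^2)\le k/40$, the union actually has at least $0.97\,d_uk$ members, giving community-sharing degree at least $0.9\cdot 0.97\,d_uk>0.8\,d_uk$ and ratio $0.4/d>1/(3d)$ --- this is exactly the computation the paper does. Second, and more substantively, you defer the Concentration and $(\alpha,\epsilon)$-Regularity conditions of Section~\ref{subsec:relax1} to a one-sentence remark, but this is where the paper's proof actually spends its effort and where the one nontrivial idea lives: $G'$-edges incident to a fixed $u$ are \emph{not} independent, but after conditioning on $\Gamma_G(u)$ the events ``$v$ has $\ge B^2/2$ length-$2$ paths to $u$'' for distinct $v\in C$ depend on disjoint edge sets $\{(v,w):w\in\Gamma_G(u)\}$ and hence become independent, so Chernoff applies; similarly Regularity requires conditioning on $\Gamma_G(u)$ and $\Gamma_G(v)$ so that the events indexed by $w$ decouple. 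You correctly identify that this is the issue, but saying the statements ``reduce to Chernoff bounds on the underlying independent $G$-edges'' is not a proof --- without the conditioning device it is not even clear what independent family you are applying Chernoff to, and Regularity is one of the two hypotheses that Theorem~\ref{thm:robustsamesizedense} genuinely needs.
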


\begin{proof}
%Among the properties of Model Dense-Similar-Size, property~\ref{modelBstar:size} and Limited Membership Property follows directly by definition. %3, 4 (each community has between $\delta k$ to $k$ members; each person is in at most $d$ communities) are straight forward by the properties 2 and 3.
We rely on the relaxation of the Dense-Similar-Size model in Section~\ref{subsec:relax1} using the concentration and ($\alpha$, $\epsilon$)-regularity.

For concentration, focus on one community $C$, notice that once we fix all the edges adjacent to $u$, the probability that $v$ has more than $B^2/2$ length-2 paths in $G$ are independent for different $v$'s in the same community. This is because the number of length-2 paths is completely determined by the number of edges from $v$ to $\Gamma_G(u)$, and these are disjoint sets for different $v$'s. Moreover, by symmetry the probability only depends on the number of edges adjacent to $u$ in community $C$. Thus once we fix the degree of $u$ inside $C$ in graph $G$, all the edges $(u,v)$ where $v\in C$ are independent and they satisfy a Chernoff bound. The degree of $u$ itself is also concentrated.

For ($\alpha$, $\epsilon$)-regularity, we show that for any $u,v\in C$, the size of their intersection within $C$ $|\Gamma(u)\cap \Gamma(v)\cap C|$ is also concentrated.  Just consider randomly choosing $\Gamma(u)$ and $\Gamma(v)$, with high probability both their sizes and the size of their intersection are close to the expectation. In this case whether some node $w$ has many length-2 paths to $u$ or $v$ is also independent (because the relevant edges are disjoint for different $w$). Chernoff bounds implies the concentration.

For the probability of edges $\alpha$, the expected number of length-2 paths between $u$ and $v$ in the same community $C$ is at least $|C|\cdot (B/\sqrt{k})^2 = B^2$, by Chernoff bound the probability that the number drops down to half of expectation is smaller than 0.1 when $B>10$.

For Assumption 3 in Section \ref{subsec:assumptions}, for each node $v$, if it is in $d'\le d$ communities, by the calculation above the expected number of community edges in $G'$ is at least $kd' \cdot 0.9 \cdot 0.9 > 0.8kd'$ (here the first 0.9 is the probability of an edge within the community, the second 0.9 is because the communities may overlap, however by property~3 the overlap must be small). The expected number of length-2 paths starting from $v$ in $G$ is at most $kd'\cdot (B/\sqrt{k} \times kd \cdot (B/\sqrt{k}) = kd'dB^2$, thus the expected number of edges of $v$ in $G'$ must be smaller than a $2/B^2$ fraction of this, which is $2kd'd$. Since $0.8kd'/2kd'd > 1/3d$ the number of ambient edges is small.

Finally, we would like to show the gap assumption: if $u\not \in C$, the expected number of edges from $u$ to $C$ in $G'$ is small. To do that we only need to show the expected number of length-2 paths from $u$ to $C$ in $G$ is small. We divide the length-2 paths from $u$ to $C$ into two cases: those that enter $C$ at the first step and those that enter at the second step. For the first type, the expected size of $\Gamma(u)\cap C$ is only $k/20d^2 \cdot d \cdot  B/\sqrt{k} = B\sqrt{k}/20d$, thus the number of length-2 paths from $u$ to $C$ where the first step is inside $C$ is at most $B\sqrt{k}/20d \cdot k \cdot B/\sqrt{k} = B^2k/20d$. For the second type, in the second step each node only has at most $k/10d^2 \cdot d \cdot  B/\sqrt{k} = B\sqrt{k}/20d$ expected edges to $C$, thus the total expected number of length-2 paths is at most $dk\cdot B/\sqrt{k} \cdot B\sqrt{k}/20d = B^2k/20$. Combining the two cases, we know the expected number of length-2 paths from $u$ to $C$ is at most  $B^2k/20+B^2k/20d \le B^2k/10$, thus the expected number of edges in $G'$ is at most $B^2k/10 \cdot 2/B^2 = 0.2 \le \alpha - \epsilon$.

\end{proof}

\section{Conclusion}
\label{sec:conclusion}

%We studied the problem of identifying communities in a graph representing a social network when each node may belong to more than one community,  but at most some constant $d$ %communities.  We
We introduced a framework for rigorously thinking about community structure that allows (a) overlapping communities (b)  includes well-studied notions such as cliques and
dense subgraphs as subcases  and (c) yet allows efficient algorithms for recovering the communities. Our assumptions lie between worst-case and average-case, are based on a long line of research, and we suspect they hold in many generative models.
%used traditional definitions of community and studied settings in which our algorithms could efficiently and provably recover the network structure.
%Our algorithms exploited the fact that the ego-centric networks were dense to import technology from the study of dense graphs. We hope that this provides a framework for future work.
Our sampling-based techniques infer global structure (socio-centric analysis) from the neighborhood of vertices (ego-centric analysis).  This local versus global framework,
familiar in computer science,
% of using relationships between local and global structures to parlay ego-centric network %analysis to an understanding of the entire social network
may be useful in other settings in sociology, especially because ego-centric networks are empirically observed to be dense and thus amenable to our techniques.
 We think our techniques should meld well with existing heuristics and plan to do a performance study on real-world data, and also to test
the validity of our assumptions.
% on it.
%Our framework also points to an interesting local-global relationship in social networks that is worth exploring: to what extent can you recover information about the network by looking %only/mostly at ego-centric networks?  How does this relationship hold up in practice versus theory?

%First, it would be interesting to implement our algorithms to see how well they perform on real-world data.  Such implementations could be aided by heuristics for dense subgraph and related %problems which are provably correct and attempt to have better average-case running times than the subroutines currently in our algorithms.  We note that such algorithms exist for finding %all maximal cliques~\cite{BronK73}.

\iffalse
Another direction would be to test how well our assumptions hold in actual data sets.  A key assumption we made was that the communities themselves were $\alpha$-regular which implies they are not too clumpy--any two people in a community have at least some expected number of mutual neighbors within the community.  Is this a sufficiently general assumption?  Can this assumptions safely be tightened.
\fi

Weakening our assumptions is another promising direction, and  we made a start in Section~\ref{sec:real-life}.  The Gap Assumption (Assumption 2) makes intuitive sense but probably cannot be guaranteed for all network nodes (eg, there will be an occasional node that knows more community members than some particular member, yet is not a community member).  Our use of the {\em expected degree model} for the intracommunity edges
(Assumption 1) can be weakened somewhat, but it still is a {\em static} model.
%is static, whereas

Arguably community evolution is a dynamic process that results in a more intricate, and possibly hierarchical structure.   Researchers have started considering two-step models. For example the first step could generate an initial graph according to our assumptions and in the second step each node connects to each neighbor of a neighbor with some small probability.
%(There are several simple variants of this where the second step is done iteratively, or, perhaps communities are better at facilitating new ties in the second step than are ambient edges.)
%Such a dynamic processes (whose study is still in its infancy) creates structure and %dependencies within and between communities which may violate our assumptions.
Making these models amenable to efficient community-detection
%xtending our results to this setting
is a good open problem.

\subsection*{Acknowledgements}
We would like to thank Nikhil Srivastava for contributions in the early part of this work that helped this project find its final direction. Thanks to Balcan et al. for giving us a manuscript of their independent work~\cite{BBBCT}.  We also would like to thank Bernie Hogan for useful consultations about the sociology literature.

\bibliographystyle{abbrv}

\bibliography{biblionew}

\begin{thebibliography}{10}

\bibitem{AiroldiMFX08}
E.~M. Airoldi, D.~M. Blei, S.~E. Fienberg, and E.~P. Xing.
\newblock Mixed membership stochastic blockmodels.
\newblock {\em Journal of Machine Learning Research}, 9:1981--2014, 2008.

\bibitem{albaclique}
R.~D. Alba.
\newblock A graph-theoretic definition of a sociometric clique.
\newblock {\em Journal of Mathematical Sociology}, 3:3--113, 1973.

\bibitem{AroraKK95}
S.~Arora, D.~Karger, and M.~Karpinski.
\newblock Polynomial time approximation schemes for dense instances of -hard
  problems.
\newblock {\em Journal of Computer and System Sciences}, 58:193--210, 1995.

\bibitem{BBBCT}
M.-F. Balcan, C.~Borgs, M.~Braverman, J.~Chayes, and S.-H. Teng.
\newblock I like her more than you: Self-determined communities.
\newblock Manuscript, Fall 2011.

\bibitem{BarabasiA99}
A.~Barabasi and R.~Albert.
\newblock Cemergence of scaling in random networks.
\newblock {\em Science}, (286):509--512, 1999.

\bibitem{BorgsCDL10}
C.~Borgs, J.~Chayes, J.~Ding, and B.~Lucier.
\newblock The hitchhiker's guide to affiliation networks: A game-theoretic
  approach.
\newblock In {\em Proceedings of the 2nd Symposium on Innovations in Computer
  Science (ICS 2011)}, 2011.

\bibitem{Breiger74}
R.~L. Breiger.
\newblock The duality of persons and groups.
\newblock {\em Social Forces}, 53(2):181--190, 1974.

\bibitem{Burt92}
R.~S. Burt.
\newblock {\em Structural Holes}, volume 137.
\newblock Harvard University Press, 1992.

\bibitem{Burt04}
R.~S. Burt.
\newblock Structural holes and good ideas.
\newblock {\em American Journal of Sociology}, 110(2):349--399, 2004.

\bibitem{CentolaM07}
D.~Centola and M.~Macy.
\newblock Complex contagions and the weakness of long ties.
\newblock {\em American Journal of Sociology}, 113(3):702--734, 2007.

\bibitem{EppsteinLS10}
D.~Eppstein, M.~L\"{o}ffler, and D.~Strash.
\newblock Listing all maximal cliques in sparse graphs in near-optimal time.
\newblock {\em Algorithms and Computation}, 6506:403--414, 2010.

\bibitem{Feld81}
S.~L.~. Feld.
\newblock The focused organization of social ties.
\newblock {\em The American Journal of Sociology}, (5):1015--1035, March 1981.

\bibitem{Fischer82}
C.~Fischer.
\newblock {\em To Dwell Among Friends}.
\newblock University of California Press, 1982.

\bibitem{FriezeK99}
A.~Frieze and R.~Kannan.
\newblock Quick approximation to matrices and applications.
\newblock {\em Combinatorica}, 19(2):175--220, 1999.

\bibitem{FriggeriCF11}
A.~Friggeri, G.~Chelius, and E.~Fleury.
\newblock Triangles to capture social cohesion.
\newblock {\em Social Science}, 2011.

\bibitem{GibsonKR98}
D.~Gibson, J.~M. Kleinberg, and P.~Raghavan.
\newblock Clustering categorical data: An approach based on dynamical systems.
\newblock {\em VLDB J.}, 8(3-4):222--236, 2000.

\bibitem{GirvanN02}
M.~Girvan and M.~E.~J. Newman.
\newblock Community structure in social and biological networks.
\newblock {\em Proceedings of the National Academy of Sciences},
  99(12):7821--7826, 2002.

\bibitem{GoldreichGR98}
O.~Goldreich, S.~Goldwasser, and D.~Ron.
\newblock Property testing and its connection to learning and approximation.
\newblock {\em Journal of the ACM}, 45(4):653--750, 1998.

\bibitem{Granovetter82}
M.~S. Granovetter.
\newblock The strength of weak ties: A network theory revisited.
\newblock {\em Sociological Theory}, 1(1983):201--233, 1983.

\bibitem{Hogan10}
B.~Hogan.
\newblock Pinwheel layout to highlight community structure.
\newblock {\em Visualization Symposium}, March 2010.

\bibitem{HoganCW07}
B.~Hogan, J.~A. Carrasco, and B.~Wellman.
\newblock Visualizing personal networks: Working with participant-aided
  sociograms.
\newblock {\em Field Methods}, 19(2):116--144, 2007.

\bibitem{Hopcroftetal03}
J.~Hopcroft, O.~Khan, B.~Kulis, and B.~Selman.
\newblock Natural communities in large linked networks.
\newblock In {\em Proceedings of the ninth ACM SIGKDD international conference
  on Knowledge discovery and data mining}, KDD '03, pages 541--546, New York,
  NY, USA, 2003. ACM.

\bibitem{Jackson08}
M.~O. Jackson.
\newblock {\em Social and Economic Networks}.
\newblock Princeton University Press, 2008.

\bibitem{KumarRRSTU00}
R.~Kumar, P.~Raghavan, S.~Rajagopalan, D.~Sivakumar, A.~Tomkins, and E.~Upfal.
\newblock Stochastic models for the web graph.
\newblock {\em Proceedings 41st Annual Symposium on Foundations of Computer
  Science}, 41:57--65, 2000.

\bibitem{LattanziS-09}
S.~Lattanzi and D.~Sivakumar.
\newblock Affiliation networks.
\newblock In {\em Proceedings of the 41st annual ACM symposium on Theory of
  computing}, STOC '09, pages 427--434, 2009.

\bibitem{LeskovecLDM08-jounal}
J.~Leskovec, K.~J. Lang, A.~Dasgupta, and M.~W. Mahoney.
\newblock Community structure in large networks: Natural cluster sizes and the
  absence of large well-defined clusters.
\newblock {\em Internet Mathematics}, 6(1):29--123, 2008.

\bibitem{MarinW-09}
A.~Marin and B.~Wellman.
\newblock Social network analysis: An introduction.
\newblock Book forth-coming, chapter available on-line.

\bibitem{McSherry01}
F.~McSherry.
\newblock Spectral partitioning of random graphs.
\newblock In {\em FOCS}, pages 529--537, 2001.

\bibitem{MishraSST07}
N.~Mishra, R.~Schreiber, I.~Stanton, and R.~E. Tarjan.
\newblock Clustering social networks.
\newblock {\em Social Networks}, 4863:56--67, 2007.

\bibitem{rabinovich}
M.~Rabinovich.
\newblock Undergraduate Independent Work, Fall 2011.

\bibitem{fortunato}
Santo and Fortunato.
\newblock Community detection in graphs.
\newblock {\em Physics Reports}, 486(3-5):75 -- 174, 2010.

\bibitem{Scott94}
J.~Scott.
\newblock {\em Social Network Analysis: And handbook}.
\newblock Sage Publications Lt;, 2 edition, 2000.

\bibitem{Wellman79}
B.~Wellman.
\newblock The community question: The intimate networks of east yorkers.
\newblock {\em American Journal of Sociology}, 84(5):1201--1231, 1979.

\bibitem{Wellman07}
B.~Wellman.
\newblock The network is personal: Introduction to a special issue of social
  networks.
\newblock {\em Social Networks}, 29(3):349 -- 356, 2007.
\newblock Special Section: Personal Networks.

\bibitem{WellmanHBBCCKKT05}
B.~Wellman, B.~Hogan, K.~Berg, J.~Boase, J.-A. Carrasco, R.~Côté, J.~Kayahara,
  T.~L.~M. Kennedy, and P.~Tran.
\newblock Connected lives: The project 1.
\newblock {\em interactions}, pages 1--50, 2005.

\bibitem{WhiteBB76}
H.~C. White, S.~A. Boorman, and R.~L. Breiger.
\newblock Social structure from multiple networks. i. blockmodels of roles and
  positions.
\newblock {\em American Journal of Sociology}, 81(4):730--780, 1976.

\bibitem{Xiesurvey}
J.~{Xie}, S.~{Kelley}, and B.~K. {Szymanski}.
\newblock {Overlapping Community Detection in Networks: the State of the Art
  and Comparative Study}.
\newblock {\em ArXiv e-prints}, Oct. 2011.

\end{thebibliography}

%\appendix
%
%\section{Omitted Proofs}
%\label{sec:appendix}
%\input{appendix.tex}

\end{document}